\newenvironment{lprogram}{\[\begin{array}{rrll}}{\end{array}\]}
\newcommand{\lprule}[3]{\ensuremath{#1 & #2 &\gets & #3}}
\newcommand{\hbase}[1]{{\ensuremath{{\cal B}_{#1}}}}
\newcommand{\body}[1]{\ensuremath{B(#1)}}
\newcommand{\head}[1]{\ensuremath{H(#1)}}
\newcommand{\comp}[1]{\ensuremath{comp(#1)}}
\newcommand{\poslit}[1]{\ensuremath{Lit^{+}(#1)}}
\newcommand{\posbody}[1]{\ensuremath{\poslit{\body{#1}}}}
\newcommand{\depgraph}[1]{\ensuremath{G_{#1}}}
\newcommand{\looprules}[2]{R^{+}_{#1}(#2)}
\newcommand{\nonlooprules}[2]{R^{-}_{#1}(#2)}
\newcommand{\loopform}[2]{\mathbb{LF}(#1,#2)}
\newcommand{\loopformprog}[1]{\mathbb{LF}(#1)}
\newcommand{\Fuzzy}[1]{\ensuremath{\mathcal{F}({#1})}}
\newcommand{\supp}[1]{\ensuremath{\mathit{supp}(#1)}}
\newcommand{\fneg}[2]{\ensuremath{\mathcal{N}_{#1}(#2)}}
\newcommand{\lneg}[1]{\fneg{l}{#1}}
\newcommand{\mneg}[1]{\fneg{m}{#1}}
\newcommand{\pneg}[1]{\fneg{p}{#1}}
\newcommand{\feq}{\ensuremath{\approx}}
\newcommand{\bodyfand}[1]{\pretnorm_{#1}}
\newcommand{\rulefimp}[1]{\prefimp_{#1}}
\newcommand{\fsetminus}{\circleddash}
\newcommand{\pretnorm}{\mathcal{T}}
\newcommand{\pretconorm}{\mathcal{S}}
\newcommand{\prefimp}{\mathcal{I}}
\newcommand{\nfimcons}[1]{\Pi_{#1}}
\newcommand{\lfpnfimcons}[1]{\nfimcons{#1}^{*}}
\newcommand{\bemph}[1]{\textbf{#1}}
\newcommand{\tuple}[1]{\ensuremath{\langle{#1}\rangle}}
\DeclareMathSymbol{\FORALL}   {\mathord}{symbols}{"38}
\DeclareMathSymbol{\EXISTS}   {\mathord}{symbols}{"39}
\newcommand{\SUCHTHAT}{\colon}
\newcommand{\Exists}[2]{\ensuremath{\EXISTS{#1} \SUCHTHAT {#2}}}
\newcommand{\Forall}[2]{\ensuremath{\FORALL{#1} \SUCHTHAT {#2}}}
  \newcommand{\lfpimcons}[2]{\esm{\Pi^*_{#1,#2}}}
 \newtheorem{definition}{Definition}
 \newtheorem{example}{Example}
 \newtheorem{proposition}{Proposition}
 \newtheorem{lemma}{Lemma}
 \newcommand{\esm}[1]{\ensuremath{#1}}	%% Ensuremath abbrv 
 \newcommand{\imp}{\Rightarrow}		%% Implication
  \newcommand{\lub}{\esm{\sqcup}}	%% Infix lub symbol
  \newcommand{\glb}{\esm{\sqcap}}	%% Infix glb symbol
 \newcommand{\hint}[1]{\langle \textnormal{#1} \rangle\;}
\title{Reducing Fuzzy Answer Set Programming to Model Finding in Fuzzy Logics}
\author[Jeroen Janssen, Steven Schockaert, Dirk Vermeir and Martine De Cock]{JEROEN JANSSEN\thanks{Funded by a joint Research Foundation-Flanders (FWO) project}\\Dept.~of Computer Science, Vrije Universiteit Brussel\\Pleinlaan 2, 1050 Brussels, Belgium\\
       \email{jeroen.janssen@vub.ac.be}
       \and STEVEN SCHOCKAERT\thanks{Postdoctoral fellow of the Research Foundation-Flanders (FWO)}\\Dept.~of Applied Mathematics and Computer Science, Universiteit Gent\\Krijgslaan 281, 9000 Ghent, Belgium\\
	\email{steven.schockaert@ugent.be}
       \and DIRK VERMEIR\\Dept.~of Computer Science, Vrije Universiteit Brussel\\Pleinlaan 2, 1050 Brussels, Belgium\\
	\email{dirk.vermeir@vub.ac.be}
       \and MARTINE DE COCK\thanks{On leave of absence from Ghent University}\\Institute of Technology, University of Washington\\1900 Commerce Street, WA-98402 Tacoma, USA\\
	\email{mdecock@u.washington.edu}}
\begin{document}

\maketitle

\begin{abstract}
 In recent years answer set programming has been extended to deal with multi-valued predicates. The resulting formalisms
allows for the modeling of continuous problems as elegantly as ASP allows for the modeling of discrete problems, by combining the
stable model semantics underlying ASP with fuzzy logics. However, contrary to the case of classical ASP where many
efficient solvers have been constructed, to date there is no efficient fuzzy answer set programming solver. A well-known
technique for classical ASP consists of translating an ASP program $P$ to a propositional theory whose models exactly
correspond to the answer sets of $P$. In this paper, we show how this idea can be extended to fuzzy ASP, paving the way
to implement efficient fuzzy ASP solvers that can take advantage of existing fuzzy logic reasoners. To appear
in Theory and Practice of Logic Programming (TPLP).
\end{abstract}

\begin{keywords}
 Fuzzy Logic, Answer Set Programming, ASSAT
\end{keywords}

\section{Introduction}

Answer Set Programming (ASP), see e.g.~\cite{BaralBook} is a form of non-monotonic reasoning based on the stable model semantics for logic programming \cite{gelfondl88}. Intuitively, in answer set programming one writes a set of rules (the program) such that certain minimal models (the answer sets) of this program correspond to solutions of the problem of interest.

In recent work, logic programming has been extended to handle many different facets of imperfect information. Most
notably are the probabilistic
\cite{baral:plog,damasio-hybridprobabilistic,Fuhr2000,lukasiewicz-probabilistic,Lukasiewicz-DisjunctiveProbabilisticLP,NgSubrahmanian-1993a,NgSubrahmanian-1994a,Straccia:reasoningweb} and possibilistic
\cite{Alsinet:possibilistic,Kim:UAI2010,nicolas:possibilistic-stable-models,Nicolas:AMAI2006}
extensions to handle uncertainty, the fuzzy extensions
\cite{Cao2000,IshizukaMitsuru-1985a,Luka06,LukaStraccia07,LukaStraccia-TopkRetrieval,MadridOjeda-Aciego-2008a,MadridAciego2009,Saad-2009,Straccia:reasoningweb,FASP:amai,Vojtas01,Wagner-1998} which allow to encode the intensity to
which the predicates are satisfied, and, more generally, many-valued extensions
\cite{damasio:sortedmultiadjoint,DamasioMO07,damasio-antitonic,DamasioViegasPereira2001,DamasioPereira-embeddings,vanEmden1986,Fitting1991,KiferLi-1988a,kifer92theory,Lakshmanan1994,LakshmananSadri-1994,LakshmananSadri-1997,LakshmananSadri-2001,PhDLakshmanan,LoyerStraccia-2002a,LoyerStraccia-2003,NerodeRemmelSubrahmanian1997,Shapiro1983,Straccia05queryanswering,straccia-annotated,straccia:fixedpoint,Subrahmanian1994}.
In this paper we focus on a fuzzy extension of ASP, called fuzzy answer set programming (FASP), which combines the
stable model semantics for logic programming with fuzzy logics. More generally, FASP provides a semantics for logic
programs in which the truth of predicates (or propositions) may be graded. Such grades may mean different things in
different applications, but often they are related to the intensity to which a given property is satisfied. From an
application point of view, this is important because it allows to describe continuous phenomena in a logical setting.
Thus a formalism is obtained in which problems with continuous domains can be modeled with the same ease by which
discrete problems can be modeled in classical ASP. 
% For example, using this extension we can model the fact that in a power plant the extent to which a valve should be closed is a function of the pressure and temperature in the tubes:
%  $$r_{\mathit{nearr}}: \mathit{near(t,t_1)} \gets \pretnorm_l(\mathit{conn(t_1,t_2)},\mathit{near(t_1,t)},\mathit{near(t_1,t_2)}) $$
% where the exact nature of $f$ is specified by the programmer based on his knowledge of the plant's requirements.

In recent years, efficient solvers for classical ASP have been developed. Some of these are based on the DPLL algorithm \cite{dpll} such as Smodels \cite{smodels} and DLV \cite{dlv}, others use ideas from SAT solving such as clasp \cite{clasp}, while still others directly use SAT solvers to find answer sets, e.g.~ASSAT \cite{assat-linzhao}, cmodels \cite{cmodels}, and pbmodels \cite{pbmodels}. The SAT based approaches have been shown to be fast, and have the advantage that they can use the high number of efficient SAT solvers that have been released in recent years. The DPLL based solvers have the advantage that they allow a flexible modeling language, since they are not restricted to what can directly and efficiently be translated to SAT, and that they can be optimized for specific types of programs.

Probabilistic ASP can be reduced to classical SAT \cite{Saad:ECSQARU2009}, allowing implementations using regular SAT solvers. Likewise, possibilistic ASP can be reduced to classical ASP \cite{Nicolas:AMAI2006}, which means ASP solvers can be used for solving possibilistic ASP programs. 

In the case of fuzzy ASP programs with a finite number of truth values, it has been shown in \cite{fasp1} that FASP can be solved using regular ASP solvers. Unfortunately, to date, no fuzzy ASP solvers or solving methods have been constructed for programs with infinitely many truth values. Our goal in this paper is to take a first step towards creating such efficient solvers by showing how the idea of translating ASP programs to SAT instances can be generalized to fuzzy answer set programs. In this way we can create fuzzy answer set solvers that use existing techniques for solving fuzzy satisfiability problems, e.g.~based on mixed-integer programming or other forms of mathematical programming.
Specifically, we focus on the ASSAT approach introduced in \cite{assat-linzhao}. While translating ASP to SAT is straightforward when ASP programs do not contain cyclic dependencies, called loops, careful attention is needed to correctly cover the important case of programs with loops. The solution presented by ASSAT is based on constructing particular propositional formulas for any loop in the program. In this paper, we pursue a similar strategy where \bemph{fuzzy loop formulas} are used to correctly deal with loops. Our main contributions can be summarized as follows:

\begin{enumerate}
 \item We define the completion of a fuzzy answer set program in the sense of \cite{FASP:amai}, and show that the answer sets of FASP programs without loops are exactly the models of its completion.
 \item By generalizing the loop formulas from \cite{assat-linzhao}, we then show how the answer sets of arbitrary FASP programs can be found, provided that the fuzzy logical connectives are t-norms. We furthermore show how the ASSAT procedure, which attempts to overcome the problem with an exponential number of loops, can be generalized to the fuzzy case.
%  \item Finally, we analyze the criterium that connectives are t-norms in greater detail, revealing that extending the idea of loop formulas to the case of general connectives is unlikely to be succesful. Nonetheless, for programs without loops, the answer sets of FASP programs with general connectives still correspond to the models of the completion.
\end{enumerate}

We furthermore show that the FASP semantics in terms of unfounded sets \cite{fasp1} coincide with the FASP semantics in terms of fixpoints (see e.g.~\cite{Luka06}). This is necessary because the development of loop formulas can more easily be done using the unfounded semantics, while the generalization of the ASSAT procedure is based on the fixpoint semantics.

The structure of the paper is as follows. Section~\ref{sec:prelims} recalls the basic fuzzy logic operators and Section~\ref{sec:FASP}  recalls the FASP framework from \cite{FASP:amai}. In Section~\ref{sec:FASP} we furthermore show that the unfounded semantics and fixpoint semantics of FASP coincide. Next, we define the completion of a FASP program in Section~\ref{sec:completion} and discuss the problems that occur in programs with loops. Section~\ref{sec:loopelimination} then shows how these problems can be solved by adding loop formulas to the completion. We illustrate our approach on the problem of placing a set of ATM machines on the roads connecting a set of cities  such that each city has an ATM machine nearby in Section~\ref{sec:example}. The reason for restricting our approach to t-norms is discussed in Section~\ref{sec:restrict}. Afterwards, in Section~\ref{sec:related}, we give an overview of related work and then present the conclusions in Section~\ref{sec:conclusion}.
% We end by discussing the problems that arise with FASP programs that use connectives that are strictly larger than the minimum in Section~\ref{sec:loopproblems}, after which we present our conclusions in Section~\ref{sec:conclusion}.

A preliminary version of this paper appeared in \cite{FASP-iclp08}. This paper extends our earlier work by adding proofs,
a detailed use case and a discussion on the problems that occur when programs are allowed to contain t-conorms in the body. Furthermore we improved the presentation by removing the aggregation-based approach that was used in the aforementioned work.

\section{Preliminaries}\label{sec:prelims}

% \subsection{Fuzzy Logics}

In general, fuzzy logics are logics whose semantics are defined in terms of variables that can take a truth value from the unit interval $[0,1]$ instead of only the values $0$ (false) and $1$ (true). Different ways exist to extend the classical logic connectives, leading to different logics with different tautologies and axiomatizations \cite{Hajek98,novak:1999}. We briefly recall the most important concepts related to fuzzy logic connectives.

A \bemph{negator} is a decreasing $[0,1] \to [0,1]$ mapping $\mathcal{N}$ satisfying $\mathcal{N}(0) = 1$ and $\mathcal{N}(1) = 0$. A negator is called \bemph{involutive} iff $\Forall{x \in [0,1]}{\mathcal{N}(\mathcal{N}(x)) = x}$. 

A \bemph{triangular norm} (t-norm) is an increasing, commutative and associative $[0,1]^2 \to [0,1]$ operator $\pretnorm$ satisfying $\Forall{x \in [0,1]}{\mathcal{T}(1,x) = x}$. Intuitively, this operator corresponds to logical conjunction. In this paper, we restrict ourselves to left-continuous t-norms. As the most commonly used t-norms obey this restriction, this poses no practical constraint. 

A \bemph{triangular conorm} (t-conorm) is an increasing, commutative and associative $[0,1]^2 \to [0,1]$ operator $\mathcal{S}$ satisfying $\Forall{x \in [0,1]}{\mathcal{S}(0,x) = x}$. Intuitively, it corresponds to logical disjunction. 

An \bemph{implicator} $\mathcal{I}$ is a $[0,1]^2 \to [0,1]$ operator that is decreasing in its first and increasing in its second argument, and satisfies $\prefimp(0,0) = \prefimp(0,1) = \prefimp(1,1) = 1$, $\prefimp(1,0) = 0$ and  $\Forall{x \in [0,1]}{\mathcal{I}(x,1) = x}$. Every left-continuous t-norm induces a \bemph{residual implicator} defined by $\mathcal{I}(x,y) = \sup \{ \lambda \in [0,1] \mid \mathcal{T}(x,\lambda) \leq y \}$. Furthermore, a left-continuous t-norm $\mathcal{T}$ and its residual implicator $\mathcal{I}$ satisfy the \bemph{residuation principle}, i.e.~for $x,y,z$ in $[0,1]$ we have that $\mathcal{T}(x,y) \leq z$ iff $x \leq \mathcal{I}(y,z)$. For any left-continuous t-norm $\pretnorm$, its residual implicator $\prefimp$ satisfies
 \begin{equation}
  \prefimp(x,y) = 1 \textnormal{~iff~} x \leq y\label{eq:resfimpprop}
 \end{equation}
For a given implicator $\prefimp$ its induced negator is the operator $\mathcal{N}$ defined by $\fneg{}{x} = \prefimp(x,0)$. We summarize some common t-norms, t-conorms, residual implicators, and induced negators in Tables~\ref{tab:fuzzy-operators} and \ref{tab:fuzzy-operators-2}.

\begin{table}
\[\begin{array}{ll}
\hline
\hline
\textnormal{t-norm} & \textnormal{t-conorm}\\
\hline
\pretnorm_m(x,y) = \min(x,y) & \pretconorm_m(x,y) = \max(x,y) \\
\pretnorm_l(x,y) = \max(0,x+y-1) & \pretconorm_l(x,y) = \min(x+y,1) \\
\pretnorm_p(x,y) = x \cdot y & \pretconorm_p(x,y) = x + y - x \cdot y \\
\hline
\hline
\end{array}\]
\caption{Common fuzzy t-norms and t-conorms over $([0,1],\leq)$}
\label{tab:fuzzy-operators}
\end{table}

% \begin{table}
% \[\begin{array}{|c|l|}
% \hline
% \textnormal{t-norm} & \textnormal{residual implicator}\\
% \hline
%  \pretnorm_m & \prefimp_m(x,y) = y \mbox{~if~} x > y \mbox{~;~} \prefimp_m(x,y) = 1 \mbox{~otherwise}\\
%  \pretnorm_l & \prefimp_l(x,y) = \min(1,1-x+y)\\
%  \pretnorm_p & \prefimp_p(x,y) = y/x \mbox{~if~} x > y \mbox{~;~} \prefimp_p(x,y) = 1 \mbox{~otherwise}\\
% \hline
% \end{array}\]
% \caption{Common residual pairs and induced negators over $([0,1],\leq)$}
% \label{tab:fuzzy-operators-2}
% \end{table}

\begin{table}
\[\begin{array}{cll}
\hline
\hline
\textnormal{t-norm} & \textnormal{residual implicator} & \textnormal{induced negator}\\
\hline
  \pretnorm_m
  & \prefimp_m(x,y) = \begin{cases}
		     y & \mbox{~if~} x > y\\
		     1 & \mbox{~otherwise}
                    \end{cases}
  & \mneg{x} = \begin{cases}
                           0 & \mbox{~if~} x > 0\\
			   1 & \mbox{~otherwise}
                          \end{cases}\\
% \mbox{~if~} x > 0 \mbox{~;~} \mneg{x} = 1 \mbox{~otherwise~}\\
 \pretnorm_l
 & \prefimp_l(x,y) = \min(1,1-x+y)
 & \lneg{x} = 1-x\\ 
%  \prefimp_l & \lneg{x} = 1-x\\
 \pretnorm_p
 & \prefimp_p(x,y) = \begin{cases}
                      y/x & \mbox{~if~} x > y\\
		      1 & \mbox{~otherwise}
                     \end{cases}
 & \pneg{x} = \mneg{x}\\
%   \prefimp_p & \pneg{x} = \mneg{x}\\
\hline
\hline
\end{array}\]
\caption{Common residual pairs and induced negators over $([0,1],\leq)$}
\label{tab:fuzzy-operators-2}
\end{table}

\bemph{The biresiduum} of a residual implicator $\prefimp$ is denoted as $\feq$, and defined by $x \feq y = \mathcal{T}(\mathcal{I}(x,y),\mathcal{I}(y,x))$. Note that due to (\ref{eq:resfimpprop}) it follows that $x \feq y$ is always equal to either $\mathcal{I}(x,y)$ or $\mathcal{I}(y,x)$. We denote the particular choice of t-norm and implicator using a subscript, when it is not clear from the context, as in $x \feq_m y = \pretnorm_m(\prefimp_m(x,y),\prefimp_m(y,x))$.

% \todo{Misschien wel introduceren, ofwel bij de completion de uitleg geven dat het biresiduum daar altijd 1 zal zijn in de ene richting, dus dat het niet echt uitmaakt.}

A \bemph{fuzzy set} $A$ in a universe $X$ is an $X \to [0,1]$ mapping. For $x \in X$ we call $A(x)$ the \bemph{membership degree} of $x$ in $A$. For convenience we denote with $A = \{a_1^{k_1},\ldots,a_n^{k_n}\}$ the fuzzy set $A$ satisfying $A(a_i) = k_i$ for $1 \leq i \leq n$ and $A(a) = 0$ for $a \not\in \{a_1,\ldots,a_n\}$. We use $\Fuzzy{X}$ to denote the universe of all fuzzy sets in $X$. The \bemph{support} of a fuzzy set $A$ is defined by $\supp{A} = \{ x \in X \mid A(x) > 0 \}$. Inclusion of fuzzy sets in the sense of Zadeh is defined as $A \subseteq B$ iff $\Forall{x \in X}{A(x) \leq B(x)}$.
%  Fuzzy set intersection (resp.~union) is defined by $(A \cap_{\mathcal{T}} B)(x) = \mathcal{T}(A(x),B(x))$ (resp.~$(A \cup_{\mathcal{S}} B)(x) = \mathcal{S}(A(x),B(x))$), where $\mathcal{T}$ is an arbitrary t-norm and $\mathcal{S}$ is an arbitrary t-conorm. If the subscript is omitted, we assume that the minimum, respectively maximum are used.
 Last, in this paper we will write the \bemph{difference} $A \fsetminus B$ of two fuzzy sets to denote the fuzzy set defined by $(A \fsetminus B)(x) = \max(0,A(x)-B(x))$.

A \bemph{signature} is a tuple $\mathbb{S} = \langle \mathbb{A},\mathbb{T},\mathbb{C},\mathbb{I},\mathbb{N} \rangle$, with $\mathbb{A}$ a set of atoms (i.e.~propositional letters), $\mathbb{T}$ a set of t-norms, $\mathbb{C}$ a set of t-conorms, $\mathbb{I}$ a set of implicators, and $\mathbb{N}$ a set of negators. Additionally, we demand that $\max$ and $\min$ are in the signature. A \bemph{fuzzy formula} over a signature $\mathbb{S}$ then is either an atom, a value from $[0,1]$, or the application of a t-norm or t-conorm from $\mathbb{T}$, resp.~$\mathbb{C}$, to two formulas, the application of an implicator from $\mathbb{I}$ to two formulas, or the application of a negator from $\mathbb{N}$ to a single formula. A \bemph{fuzzy theory} over a signature $\mathbb{S}$ is a set of fuzzy formulas over $\mathbb{S}$. An \bemph{interpretation} $I$ over a signature $\mathbb{S}$ is a mapping from $\mathbb{A}$ to $[0,1]$. It is extended to fuzzy formulas in a straightforward way, i.e.~if $F$ and $G$ are fuzzy formulas, then $I(\pretnorm(F,G)) = \pretnorm(I(F),I(G))$ (with $\pretnorm \in \mathbb{T}$), $I(\pretconorm(F,G) = \pretconorm(I(F),I(G))$ (with $\pretconorm \in \mathbb{C}$), $I(\prefimp(F,G)) = \prefimp(I(F),I(G))$ (with $\prefimp \in \mathbb{I}$), and $I(\fneg{}{F}) = \fneg{}{I(F)}$ (with $\mathcal{N} \in \mathbb{N}$). An interpretation $M$ is a \bemph{model} of a fuzzy formula $F$, denoted $M \models F$, iff $M(F) = 1$. An interpretation $M$ is a model of a fuzzy theory $\Theta$, denoted $M \models \Theta$, iff for each $F \in \Theta$ we have that $M \models F$.

A particular signature leads to a particular fuzzy logic \cite{Hajek98}. For example, $\mathbb{S} = \langle \mathbb{A},$ $\{\pretnorm_m\},$ $\{\pretconorm_m\},$ $\{\prefimp_m\},$ $\{\mathcal{N}_m\} \rangle$ gives rise to G\"odel logic, $\mathbb{S} = \langle \mathbb{A},$ $\{\pretnorm_l,\pretnorm_m\},$ $\{\pretconorm_l,\pretconorm_m\},$ $\{\prefimp_l\},$ $\{\mathcal{N}_l\} \rangle$ gives rise to \L ukasiewicz logic, and $\mathbb{S} = \langle \mathbb{A},$ $\{\pretnorm_p,\pretnorm_m\},$ $\{\pretconorm_p,\pretconorm_m\},$ $\{\prefimp_p\},$ $\{\mathcal{N}_p\} \rangle$ gives rise to product logic. For example, \L ukasiewicz logic is generally considered to be closest in spirit to classical logic, in the sense that many of its important properties are preserved. Another important advantage of \L ukasiewicz logic is that the implicator is continuous, which is not the case for G\"odel or product logic. Reasoning in this logic can be done using mixed integer programming, whereas reasoning in G\"odel logic can be done with the help of boolean SAT solvers. As in the boolean case, satisfiability checking in these three particular logics is NP-complete.

Last, we denote the infimum, resp.~supremum of two elements of $[0,1]$ as $a \glb b$, resp.~$a \lub b$.

% A particular t-norm leads to a particular fuzzy logic \cite{Hajek98}. For example, the minimum, \L ukasiewicz, and product t-norm give rise to respectively G\"odel logic, \L ukasiewicz logic, and product logic, which all have different properties. For example, \L ukasiewicz logic preserves the greatest number of tautologies from boolean logic, and reasoning in this logic can be done using mixed integer programming, whereas reasoning in G\"odel logic can be done with the help of boolean SAT solvers. As in the boolean case, satisfiability checking in these three particular logics is NP-complete.

\section{Fuzzy Answer Set Programming}\label{sec:FASP}

% \todo{Deze sectie zit nog niet volledig goed qua tekst. Hier en daar zijn er dingen die allicht niet zo duidelijk zullen zijn, bv. waarom we de twee semantieken moeten invoeren enzo.}

Over the years many different fuzzy answer set programming formalisms have been developed \cite{Luka06,LukaStraccia07,MadridOjeda-Aciego-2008a,MadridAciego2009,Saad-2009,Straccia:reasoningweb,FASP:amai}. Most of these base their semantics on fixpoints or minimal models, in combination with a reduct operation. The approaches described in \cite{loyer:epistemic,FASP:amai}, however, are constructed from a generalization of unfounded sets. As the development of loop formulas can be done more elegantly when starting from unfounded sets, and we can show that the fixpoint semantics are equivalent to the semantics proposed in \cite{FASP:amai}, we will base our semantics on the latter framework. However, because the generalization of the ASSAT procedure from \cite{assat-linzhao} is based on fixpoint semantics, in this section we also show novel links between the unfounded and fixpoint semantics that ensure the correctness of our generalized procedure. 
% Note that \cite{FASP:amai} is not the most general FASP formalism that can be defined. We show in Section~\ref{sec:loopproblems} why we are not considering more general extensions.  
First, we recall the main definitions from \cite{FASP:amai}.

% \begin{definition}[FASP program]\label{def:prog}
 A \bemph{literal}\footnote{As usual, we assume that programs have already been grounded.} is either an atom $a$ or a \bemph{constant} from $[0,1]$. An \bemph{extended literal} is either a literal (called a \bemph{positive extended literal}) or an expression of the form $\fneg{}{a}$ (called a \bemph{negative extended literal}), with $a$ an atom and $\mathcal{N}$ an arbitrary negator. A \bemph{rule} $r$ is of the form 
  $$r: a \gets \pretnorm(b_1,\ldots,b_n)$$
 where $n > 0$, $a$ is a literal, $\{b_1,\ldots,b_n\}$ is a set of extended literals, $\pretnorm$ is an arbitrary t-norm, and $r$ is a rule label. Furthermore, for convenience, we define $\pretnorm(b) = b$, and define $\pretnorm(b_1,\ldots,b_n)$ as $\pretnorm(b_1,\pretnorm(b_2,\ldots))$. The literal $a$ is called the \bemph{head}, $\head{r}$, of $r$, while the set $\{b_1,\ldots,b_n\}$ is called the \bemph{body}, $\body{r}$, of $r$. We use $\poslit{\body{r}}$ to denote the set of positive extended literals from the set $\body{r}$.
 Given a rule $r$ we denote the t-norm used in its body as $\bodyfand{r}$; the residual implicator corresponding to $\bodyfand{r}$ is denoted as $\rulefimp{r}$.
 A \bemph{constraint} is a rule with a constant in its head, whereas a \bemph{fact} is a rule with a constant as its body (and no
constant in its head). For convenience, we abbreviate a rule of the form $r: a \gets \pretnorm(b,1)$, with $b$ an extended literal, as $r: a \gets b$.
 
%  \todo{Begin Changes}
%  
%  A \bemph{FASP program} $P$ is a tuple $\tuple{\rbase{},\aggr{}}$, where $\rbase{}$ is a finite set of rules and $\aggr{}$ is an aggregator, i.e.~a monotonic\footnote{The ordering of $P \to [0,1]$ functions used for the monotonicity is defined as usual: $f \leq f'$ iff $f(r) \leq f'(r)$ for each $r \in P$.} function $\aggr{} : (P \to [0,1]) \to [0,1]$ that aggregates the degrees of the rules into a single intensity of satisfaction for the program. For a given program $P$ we denote its rule base as $\rbase{P}$ and its aggregator as $\aggr{P}$. Note that rules in a program may have different t-norms in their bodies. Also note that the restriction that a single rule may only use one t-norm in its body poses no problems: any rule with mixed t-norms in its body can be simulated by a set of rules with cardinality linear in the size of the rule's body. The set of all atoms occurring in $P$ is called the \bemph{Herbrand Base} $\hbase{P}$ of $P$. The set of all rules in $P$ with atom $a$ in their heads is denoted as $P_a$.
 
 A \bemph{FASP program} $P$ is a finite set of rules. We call a program \bemph{simple} if no rule contains negative extended literals. The set of all atoms occurring in $P$ is called the \bemph{Herbrand Base} $\hbase{P}$ of $P$. Note that the Herbrand Base is finite since we assume that no function symbols occur in the terms of the ungrounded program. For any $a \in \hbase{P}$ we define the set $P_a$ as the set of rules with atom $a$ in their head. An \bemph{interpretation} $I$ of $P$ is a fuzzy set in $\hbase{P}$. We extend interpretations to constants from $[0,1]$, extended literals, and rules as follows:
 \begin{enumerate}
   \item $I(c) = c$ if $c \in [0,1]$
   \item $I(\fneg{}{l}) = \;\fneg{}{I(l)}$ if $l$ is a literal
   \item $I(a \gets \bodyfand{}(b_1,\ldots,b_n)) = \rulefimp{r}(\bodyfand{}(I(b_1),\ldots,I(b_n)),I(a))$
%    \item $I(r: a \gets \bodyfand{}(b_1,\ldots,b_n,\fneg_1 c_1,\ldots,\fneg_m c_m)) = \rulefimp{r}(I(\body{r}),I(a)) = \rulefimp{r}(I(\bodyfand{r}(b_1,\ldots,b_n,\fneg_1 c_1,\ldots,\fneg_m c_m)),I(a))$ 
%    \item $I(a \gets \fand(b_1,\ldots,b_n,\fneg_1 c_1,\ldots,\fneg_m c_m)) = I(\body{r}) \fimp I(\head{r}) = (I(b_1) \fand \ldots \fand I(b_n) \fand \fneg_1 I(c_1) \fand \ldots \fand \fneg_m I(c_m)) \fimp I(a)$
  \end{enumerate}
 A \bemph{model} of a program $P$ is an interpretation $I$ of $P$ such that for each rule $r \in P$ we have $I(r) = 1$.
%  Note that an interpretation can satisfy a rule to a certain degree. For example, the interpretation $I = \{ a^{0.8} \}$ satisfies the rule $c: 0 \gets_l a$ only to degree $0.2$. The job of the aggregator is to combine the different rule satisfactions of an interpretation to a single degree. To limit the aggregated score to only be derived from the rule fulfillment, and not the individual literal scores, we introduce \bemph{rule interpretations} of a program, which are functions mapping a rule $r \in P$ to $[0,1]$. Any interpretation $I$ of a fasp program $P$ gives rise to such a rule interpretation defined as $\ruleint{I}(r) = I(r)$ for any rule $r \in P$. Hence, the aggregator maps these rule interpretations to a single degree. We furthermore require that the aggregator of a program is monotonically increasing, i.e.~that for any two rule interpretations $\ruleint{1}$ and $\ruleint{2}$ such that for all $r \in P$, $\ruleint{1}(x) \leq \ruleint{2}(x)$ we have that $\aggr{P}(\ruleint{1}) \leq \aggr{P}(\ruleint{2})$.
 
 Note that, although each rule in a FASP program can only have a single t-norm in its body, a rule with mixed t-norms, such as $r: a \gets \pretnorm_1(a,\pretnorm_2(b,c))$, can easily be simulated by introducing a polynomial number of new literals and rules. In the case of rule $r$ we need one new literal $a'$ and two new rules $r_1: a \gets \pretnorm_1(a,a')$ and $r_2: a' \gets \pretnorm_2(b,c)$.
 
%  Note that we slightly diverge from the FASP framework in \cite{FASP:amai} since the latter only allows a single t-norm and implicator to be used for all rules, while the introduced formalism lets the programmer use rules with differing t-norms. Furthermore we allow arbitrary elements from $[0,1]$ in the head of a constraint, while \cite{FASP:amai} only allows $0$.
 
%  A \bemph{rule interpretation} for $P$ is a function $\ruleint{} : P \to [0,1]$ that associates a degree $\ruleint{r}$ to each rule $r \in P$.
% \end{definition}

% \todo{Nu is het wel raar dat we de t-norms in infix notatie hebben gedefinieerd hierboven ...}
% \todo{Note that a program with rules with mixed t-norms in body can easily be simulated}
% \todo{Note that if a program has only a single t-norm in all rule bodies, and only the negator associated with this t-norm as naf, the resulting SAT translation will be a theory in the corresponding fuzzy logic (narrow sense)}
% \todo{Explanation about NAF?}
% \todo{Commentaar v. Steven: het is niet duidelijk wat bv. een solution juist betekent etc. in voorbeeld}
% \todo{Commentaar v. Steven: beter een voorbeeld nemen dat niet door monotoniciteit aggregator komt}

\begin{example}\label{ex:prog1}
 Consider the program $P$, which consists of the following set of rules:
 \begin{align*}
  r_{1}: a &\gets \pretnorm_m(b,c)\\
  r_{2}: b &\gets 0.8\\
  r_{3}: c &\gets \pretnorm_m(a,\lneg{b})\\
  r_{4}: 0 &\gets \pretnorm_l(a,b)
 \end{align*}
 Note that rule $r_2$ is a fact, and rule $r_4$ a constraint.
The fuzzy sets $I_{1} = \{ a^{0}, b^{0.8}, c^{0} \}$, and $I_{2} = \{ a^{0.2}, b^{0.8}, c^{0.2} \}$ are interpretations of $P$.
For both interpretations we have that $I_1(r_1) = I_2(r_1) = \ldots = I_1(r_4) = I_2(r_4) = 1$, i.e.~they both are models of the program.
\end{example}

The definition of fuzzy answer sets relies on the notion of \bemph{unfounded sets}, studied in \cite{FASP:amai}, which correspond to sets of ``assumption atoms'' that have no proper motivation from the program.

\begin{definition}[Unfounded sets \cite{FASP:amai}]\label{def:unfounded}
 Let $P$ be a FASP program and let $I$ be an interpretation of $P$. A set $U \subseteq \hbase{P}$ is called \bemph{unfounded} w.r.t.~$I$ iff for each atom $u \in U$ and rule $r \in P_u$ either:
  \begin{enumerate}
   \item $U \cap \posbody{r} \neq \emptyset$; or
   \item $I(u) > I(\body{r})$; or
   \item $I(\body{r}) = 0$
  \end{enumerate}
 An interpretation $I$ of $P$ is called \bemph{unfounded-free} iff $\supp{I} \cap U = \emptyset$ for any set $U$ that is unfounded w.r.t.~$I$.
\end{definition}

Intuitively, an unfounded set w.r.t.~an interpretation $I$ of a FASP program $P$ is a set of atoms that obtain a value in $I$ that is not motivated by the rules of the program. The first condition of Definition~\ref{def:unfounded} ensures that the values of the literals in $U$ are justified by the values of literals not in $U$.
 The second condition shows that the degree to which a rule can motivate an atom is bounded by the value of its body. The third condition is needed to obtain a proper generalization of the classical definition of unfounded sets~\cite{BaralBook} (see~\cite{FASP:amai} for more details).

% The first condition in Definition~\ref{def:unfounded} prevents circular motivation between assumptions. The second condition prohibits assumptions motivated by rules that are not applied conservatively, i.e.~a rule $r$ is used to motivate a truth value of the head in excess of the support that is actually available (from $\body{r}$). The third condition finally helps to ensure that Definition~\ref{def:unfounded} is a proper generalization of the classical definition of unfounded sets \cite{BaralBook}.

\begin{example}\label{ex:prog1-unfounded}
 Consider program $P$ and interpretations $I_1$ and $I_2$ from Example~\ref{ex:prog1}. For $I_{2}$ we can see that $U_{2} = \{a,c\}$ is an unfounded set, as for rule $r_{1}$ and $r_{3}$, the only rules with $a$ or $c$ in the head, we have that $\posbody{r_{1}} \cap U_{2} \neq \emptyset$ and $\posbody{r_{3}} \cap U_{2} \neq \emptyset$. Since $\supp{I_{2}} \cap U_{2} \neq \emptyset$, interpretation $I_{2}$ is not unfounded-free. Interpretation $I_{1}$, however, is unfounded-free.
\end{example}

% \begin{proof}
%  We split the proof into the case for $l \in \supp{I}$ and $l \not\in \supp{I}$. First the case for $l \in \supp{I}$.
%   Suppose $I$ is unfounded-free w.r.t.~$P$. From Definition~\ref{def:unfounded} we then know that
%    $$\Forall{U \subseteq \hbase{P}}{U \cap \supp{I} \neq \emptyset \imp \Exists{u \in U}{\Exists{r \in P_u}{U \cap \poslit{\body{r}} = \emptyset}}}$$
%    $$\wedge \neg(I(u) > \minsuppint{I}{r}) \wedge I(\body{r}) \neq 0$$
%   By instantiating $\forall$ with $\{l\}$ for some $l \in \supp{I}$ we obtain
%    $$l \in \supp{I} \imp \Exists{r \in P_l}{\neg(I(l) > \minsuppint{I}{r})}$$
%   Using Proposition~\ref{prop:monotheadsupport} this means that
%    $$l \in \supp{I} \imp \Exists{r \in P_l}{(I(l) = \minsuppint{I}{r})}$$
%   We can once again use Proposition~\ref{prop:monotheadsupport} to obtain
%    $$l \in \supp{I} \imp \Exists{r \in P_l}{(I(l) = \minsuppint{I}{r}) \wedge \Forall{r' \in P_l}{\minsuppint{I}{r'} \leq \minsuppint{I}{r}}}$$
%   Since any upper bound contained in a set is the supremum in $([0,1],\leq)$ this leads us to
%    $$l \in \supp{I} \imp I(l) = \sup\{ \minsuppint{I}{r} \mid r \in P_l \}$$
%   
%   The case where $l \not\in \supp{I}$ goes as follows. First remark that due to Proposition~\ref{prop:monotheadsupport} for each $r \in P_l$ we have $\minsuppint{I}{r} = 0$. Hence using Proposition~\ref{prop:support} this means $\sup\{ \minsuppint{I}{r} \mid r \in P_l \} = 0$.
% \end{proof}

As shown in \cite{FASP:amai}, \bemph{answer sets} of FASP programs can be defined as the unfounded-free interpretations, which reflects the intuition that each atom in an answer set should have a proper motivation from the program.

\begin{definition}[Answer Set \cite{FASP:amai}]\label{def:kanswersets}
 Let $P$ be a FASP program. A model $M$ of $P$ is called an \bemph{answer set} iff $M$ is unfounded-free.
\end{definition}

\begin{example}
Consider program $P$ and interpretation $I_1$ from Example~\ref{ex:prog1}. Since we know from Example~\ref{ex:prog1-unfounded} that $I_{1}$ is unfounded-free, it follows that $I_{1}$ is an answer set of $P$.
\end{example}

An alternative definition of answer sets, in terms of fixpoints, exists (see for example \cite{Luka06}). We will use this to generalize the ASSAT procedure described in \cite{assat-linzhao}. 

\begin{definition}[Immediate Consequence Operator \cite{Luka06}]\label{def:imcons}
 Let $P$ be a FASP program. The \bemph{immediate consequence operator} of $P$ is the mapping $\nfimcons{P} : \Fuzzy{\hbase{P}} \to \Fuzzy{\hbase{P}}$ defined by
  \begin{align*}
   \nfimcons{P}(I)(l) &= \sup \{ I(\body{r}) \mid r \in P_l \}
  \end{align*}
\end{definition}

As shown in \cite{Luka06}, for simple programs this operator is monotonic and thus has a least fixpoint \cite{tarski:lattice}, denoted as $\lfpnfimcons{P}$. For these simple programs, \cite{Luka06} then defines the answer sets of a program as the least fixpoints of this operator. Since this operator is monotonic, the least fixpoint is unique and can be found by iteratively applying $\nfimcons{P}$ from the interpretation $\emptyset$ until a fixpoint is encountered. For non-simple programs, \cite{Luka06} defines a reduct operation that transforms a non-simple program into a simple program.

\begin{definition}[Reduct \cite{Luka06}]\label{def:reduct}
 Let $P$ be a FASP program and let $r: a \gets \pretnorm(b_1,\ldots,b_m,\fneg{}{b_{m+1}},\ldots,\fneg{}{b_{n}})$ be a rule in $P$, where $(b_1,\ldots,b_m) = \posbody{r}$. The \bemph{reduct} of rule $r$, with respect to an interpretation $I$, is denoted as $r^I$, and defined by
   $$r^I: a \gets \pretnorm(b_1,\ldots,b_m,I(\fneg{}{b_{m+1}}),\ldots,I(\fneg{}{b_n}))$$
 The reduct of a program $P$ w.r.t.~an interpretation $I$ is defined as $P^I = \{ r^I \mid r \in P \}$.
\end{definition}

\begin{example}\label{ex:prog1-reduct}
Consider program $P$ and interpretation $I_1$ from Example~\ref{ex:prog1}. The reduct of $P$ with respect to $I_1$ then is the following program
 \begin{align*}
   r_{1}^{I_{1}}: a &\gets \pretnorm_{m}(b,c)\\
  r_{2}^{I_{1}}: b &\gets 0.8\\
  r_{3}^{I_{1}}: c &\gets \pretnorm_{m}(a,0.2)\\
  r_{4}^{I_{1}}: 0 &\gets \pretnorm_{l}(a,b)
 \end{align*}
\end{example}

In the following we show the novel result that the semantics in terms of fixpoints coincide with those in terms of unfounded sets.
An important lemma regarding the immediate consequence operator and reduct is the following.

\begin{lemma}\label{lem:notmodel-lfviolated}
 Let $P$ be a FASP program. For any interpretation $I$ of $P$ it holds that $I = \nfimcons{P}(I)$ iff $I = \nfimcons{P^I}(I)$.
\end{lemma}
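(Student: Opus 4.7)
The plan is to prove the much stronger statement that $\nfimcons{P}(I) = \nfimcons{P^I}(I)$ holds for every interpretation $I$, from which the stated biconditional is immediate. The whole argument hinges on a single observation about how $I$ evaluates rule bodies versus reduct rule bodies.

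First I would fix an arbitrary interpretation $I$ and an arbitrary rule $r \in P$ of the form
$$r: a \gets \pretnorm(b_1,\ldots,b_m,\fneg{}{b_{m+1}},\ldots,\fneg{}{b_n}),$$
with $(b_1,\ldots,b_m) = \poslit{\body{r}}$. Applying Definition~\ref{def:reduct}, the corresponding reduct rule is
$$r^I: a \gets \pretnorm(b_1,\ldots,b_m,I(\fneg{}{b_{m+1}}),\ldots,I(\fneg{}{b_n})).$$
The only syntactic difference is that each negative extended literal $\fneg{}{b_j}$ has been replaced by the constant $I(\fneg{}{b_j}) \in [0,1]$. By the extension of interpretations to constants (clause~1 in the definition preceding Example~\ref{ex:prog1}), we have $I(c) = c$ for every $c \in [0,1]$; in particular $I(I(\fneg{}{b_j})) = I(\fneg{}{b_j})$. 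Hence, evaluating both bodies by pushing $I$ through the t-norm yields literally the same expression, so $I(\body{r}) = I(\body{r^I})$.

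Next, observing that the mapping $r \mapsto r^I$ is a bijection between $P_l$ and $(P^I)_l$ for every atom $l \in \hbase{P}$ (the reduct operation preserves the head), I can plug the equality above into Definition~\ref{def:imcons}:
\begin{eqproof2}
\nfimcons{P}(I)(l) &=& \sup \{ I(\body{r}) \mid r \in P_l \} \\
                  &=& \sup \{ I(\body{r^I}) \mid r \in P_l \} \\
                  &=& \sup \{ I(\body{r'}) \mid r' \in (P^I)_l \} \;=\; \nfimcons{P^I}(I)(l).
\end{eqproof2}
Since this holds for every $l$, we obtain $\nfimcons{P}(I) = \nfimcons{P^I}(I)$ as interpretations, and therefore $I = \nfimcons{P}(I)$ iff $I = \nfimcons{P^I}(I)$.

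I do not expect any real obstacle here. The only subtle point to verify carefully is that evaluating a constant produced by $I$ gives the same value back (i.e.\ that $I(I(\fneg{}{b_j})) = I(\fneg{}{b_j})$), which is precisely the purpose of the convention $I(c)=c$ for $c \in [0,1]$ in the extension of interpretations. Everything else is bookkeeping about rule heads being untouched by the reduct and about the suprema in the definition of $\nfimcons{P}$ ranging over matched sets of rules.
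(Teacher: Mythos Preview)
Your proof is correct and is precisely the argument the paper has in mind: the paper's own proof merely states that the claim ``follows trivially by the construction of $P^I$ and Definition~\ref{def:reduct}.'' You have spelled out exactly that triviality---showing $I(\body{r}) = I(\body{r^I})$ because the reduct replaces each $\fneg{}{b_j}$ by the constant $I(\fneg{}{b_j})$ and $I$ returns constants unchanged---and in fact established the slightly stronger identity $\nfimcons{P}(I) = \nfimcons{P^I}(I)$, which the paper implicitly relies on elsewhere (e.g.\ in the proof of Proposition~\ref{prop:kansset-is-modelcomp}).
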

\begin{proof}
 Follows trivially by the construction of $P^I$ and Definition~\ref{def:reduct}.
%  \begin{align*}
%   I = \imcons{P}{\ruleint{I}}
%     &\equiv \hint{Def.~$\imcons{P}{\ruleint{}}$} \Forall{a \in \hbase{P}}{I(a) = \sup \{ \minsuppint{I}{r} \mid r \in P_a \}}\\
%     &\equiv \hint{Lemma~\ref{lem:fixp-charact}} \Forall{a \in \hbase{P}}{I(a) = \sup \{ \minsuppint{I}{r^I} \mid r \in P_a\}}\\
%     &\equiv \hint{Def.~Reduct} \Forall{a \in \hbase{P^I}}{I(a) = \sup \{ \minsuppint{I}{r} \mid r \in P^I_a\}}\\
%     &\equiv \hint{Def.~\imcons{P}{I}} I = \imcons{P^I}{\ruleint{I}}
%  \end{align*}
\end{proof}

We now show that any answer set is a fixpoint of the immediate consequence operator.

\begin{lemma}\label{lem:unfounded-fixpoint}
 Let $P$ be a FASP program. Then any answer set $A$ of $P$ is a fixpoint of $\nfimcons{P}$.
%   $$\Forall{l \in \hbase{P}}{I(l) = \sup_{r \in P_l} I(\body{r})}$$
\end{lemma}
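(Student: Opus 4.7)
The plan is to establish the fixpoint equation $A = \nfimcons{P}(A)$ by proving the two inequalities $\nfimcons{P}(A)(a) \leq A(a)$ and $A(a) \leq \nfimcons{P}(A)(a)$ separately for every $a \in \hbase{P}$. The first inequality will come directly from $A$ being a model, and the second will be forced by unfounded-freeness via a neat contrapositive: whenever the supremum of rule-body values is strictly less than $A(a)$, the ``bad'' atoms can be collected into an unfounded set.

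For the easy direction, I would fix an atom $a \in \hbase{P}$ and an arbitrary rule $r \in P_a$. Since $A$ is a model, $A(r) = \rulefimp{r}(A(\body{r}), A(a)) = 1$, and by property (\ref{eq:resfimpprop}) of the residual implicator this gives $A(\body{r}) \leq A(a)$. Taking the supremum over all $r \in P_a$ yields $\nfimcons{P}(A)(a) \leq A(a)$; note this even handles the case $P_a = \emptyset$, where by convention the supremum is $0$.

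The interesting direction uses unfounded sets. I would define
\[
  U = \{\, a \in \hbase{P} \mid A(a) > \nfimcons{P}(A)(a) \,\}
\]
and argue that $U$ is unfounded with respect to $A$ in the sense of Definition~\ref{def:unfounded}. Indeed, for every $a \in U$ and every rule $r \in P_a$, the definition of $\nfimcons{P}$ ensures $A(\body{r}) \leq \nfimcons{P}(A)(a) < A(a)$, so condition~2 of Definition~\ref{def:unfounded} is satisfied (no appeal to conditions~1 or~3 is needed). Hence $U$ is unfounded with respect to $A$.

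To close the argument, observe that any $a \in U$ satisfies $A(a) > \nfimcons{P}(A)(a) \geq 0$, hence $a \in \supp{A}$, giving $U \subseteq \supp{A}$. By unfounded-freeness of $A$ we must have $\supp{A} \cap U = \emptyset$, which forces $U = \emptyset$. This is exactly the reverse inequality $A(a) \leq \nfimcons{P}(A)(a)$ for every $a \in \hbase{P}$, and combining the two inequalities yields $A = \nfimcons{P}(A)$. I do not anticipate a real obstacle here; the only subtleties to be careful about are that the head of a rule may be a numerical constant (so constraints do not contribute to any $P_a$ with $a$ an atom, which is what we need) and the convention $\sup \emptyset = 0$, which makes the argument go through even for atoms without defining rules.
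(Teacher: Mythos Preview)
Your proof is correct. Both your argument and the paper's rest on the same two ingredients: the model property yields $A(\body{r}) \leq A(a)$ for every rule (the easy inequality), and unfounded-freeness rules out any strict gap $A(a) > \sup_{r \in P_a} A(\body{r})$ (the interesting inequality). The packaging differs slightly: the paper splits into the cases $a \in \supp{A}$ and $a \notin \supp{A}$, and in the former case argues that the singleton $\{a\}$ cannot be unfounded, extracting a single rule $r$ with $A(a) = A(\body{r})$. You instead avoid the case split by collecting all offending atoms into one global set $U$, verifying condition~2 of Definition~\ref{def:unfounded} uniformly, and using $U \subseteq \supp{A}$ together with unfounded-freeness to force $U = \emptyset$. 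Your version is marginally more streamlined in that it handles atoms outside the support and atoms with $P_a = \emptyset$ without a separate argument; the paper's version has the minor advantage of exhibiting, for each $a \in \supp{A}$, an explicit rule whose body value attains $A(a)$.
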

\begin{proof}
 Let $A$ be an answer set of $P$. We show that $A(a) = \sup \{ A(\body{r}) \mid r \in P_a \} = \nfimcons{P}(A)(a)$ for any $a \in \hbase{P}$, from which the stated readily follows. The proof is split into the case for $a \in \supp{A}$ and $a \not\in \supp{A}$. For any $a \in \supp{A}$ it must hold that $\{a\}$ is not unfounded w.r.t.~$A$,
 meaning that $P_a \neq \emptyset$ and there is some $r \in P_a$ such that $A(a) \leq A(\body{r})$. Since $A(r) = 1$, it then follows from (\ref{eq:resfimpprop}) that $A(a) = A(\body{r})$. As for any $r' \in P_a$ we have $A(r') = 1$, from (\ref{eq:resfimpprop}) it also follows that $A(\body{r}) = A(a) \geq A(\body{r'})$. Hence $A(\body{r}) = A(a)$ is the supremum of $\{ A(\body{r'}) \mid r' \in P_a \}$.

 The case for $a \not\in \supp{A}$ is as follows. First remark that as $A(a) = 0$, it follows from (\ref{eq:resfimpprop}) and the fact that $A(r) = 1$ for each $r \in P_a$, that $A(\body{r}) = 0$. Hence, $A(a) = \sup \{ A(\body{r}) \mid r \in P_a \}$. 
\end{proof}

Second we show that answer sets can be characterized in terms of fixpoints of the immediate consequence operator.

% \begin{lemma}\label{lem:fixp-charact}
%  Let $P$ be a FASP program. For any interpretation $I$ of $P$ it holds that $\minsuppint{I}{r} = \minsuppint{I}{r^I}$.
% \end{lemma}
% \begin{proof}
%  This follows easily using structural induction and Proposition~\ref{prop:support}.
% \end{proof}

\begin{proposition}\label{prop:fixp-charact}
 Let $P$ be a FASP program. An interpretation $A$ is an answer set of $P$ iff $A = \lfpnfimcons{P^A}$.
\end{proposition}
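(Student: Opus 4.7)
The plan is to prove both directions, using Lemmas~\ref{lem:unfounded-fixpoint} and~\ref{lem:notmodel-lfviolated} to pass between the unfounded and fixpoint semantics.

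For the direction $\Rightarrow$, if $A$ is an answer set then Lemma~\ref{lem:unfounded-fixpoint} followed by Lemma~\ref{lem:notmodel-lfviolated} makes $A$ a fixpoint of the monotonic operator $\nfimcons{P^A}$, so $\lfpnfimcons{P^A} \leq A$. To force equality I would show that the ``gap'' set $U = \{a \in \hbase{P} \mid A(a) > \lfpnfimcons{P^A}(a)\}$ is unfounded w.r.t.~$A$; since unfounded-freeness then forces $U \cap \supp{A} = \emptyset$ while $U \subseteq \supp{A}$ by construction, $U$ must be empty. To verify unfoundedness, take $u \in U$ and $r \in P_u$, and suppose cases~(2) and~(3) of Definition~\ref{def:unfounded} both fail; residuation applied to $A(r)=1$ then forces $A(u) = A(\body{r}) > 0$. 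If case~(1) also failed, every positive body atom $b$ of $r$ would satisfy $A(b) = \lfpnfimcons{P^A}(b)$, so evaluating the reduct body yields $\lfpnfimcons{P^A}(\body{r^A}) = A(\body{r}) = A(u)$; hence $\lfpnfimcons{P^A}(u) \geq \lfpnfimcons{P^A}(\body{r^A}) = A(u) > \lfpnfimcons{P^A}(u)$, a contradiction.

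For the direction $\Leftarrow$, assume $A = \lfpnfimcons{P^A}$. Modelhood follows from $A$ being a fixpoint of $\nfimcons{P}$ (Lemma~\ref{lem:notmodel-lfviolated}) together with (\ref{eq:resfimpprop}). For unfounded-freeness, suppose towards contradiction that some $U$ unfounded w.r.t.~$A$ meets $\supp{A}$, and define
\[
s_a = \sup\{A(\body{r}) \mid r \in P_a,\; \posbody{r} \cap U = \emptyset\}
\]
for each $a \in U$. Every rule counted in $s_a$ fails case~(1), so for $a \in U \cap \supp{A}$ the remaining cases~(2) and~(3) both yield $A(\body{r}) < A(a)$, whence $s_a < A(a)$ (the supremum is attained because $P_a$ is finite). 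Pick $a^\star \in U \cap \supp{A}$ maximising $A$ and set $s = \max_{a \in U} s_a$; then $s < A(a^\star)$. I would then prove by induction on $n$ that $A_n(a) \leq s$ for every $a \in U$, where $A_0 = \emptyset$ and $A_{n+1} = \nfimcons{P^A}(A_n)$: for $r \in P_a$ with $\posbody{r} \cap U = \emptyset$, monotonicity of the t-norm and $A_n \leq A$ give $A_n(\body{r^A}) \leq A(\body{r}) \leq s$, whereas for $r$ with some $b \in \posbody{r} \cap U$ the inequality $\pretnorm \leq \min$ combined with the induction hypothesis gives $A_n(\body{r^A}) \leq A_n(b) \leq s$. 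Left-continuity of the t-norms makes $A = \sup_n A_n$, so $A(a^\star) \leq s < A(a^\star)$, a contradiction.

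The main obstacle is the reverse direction: the naive plan of showing that truncating $A$ to zero on $U$ produces a pre-fixpoint of $\nfimcons{P^A}$ fails, since case~(2) of Definition~\ref{def:unfounded} does not force the offending rule body to vanish under this truncation. Introducing the external-support values $s_a$, each strictly below the corresponding $A(a)$, lets the induction bound every iterate uniformly by $s$ and thereby sidesteps case~(2) entirely.
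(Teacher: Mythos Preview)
Your forward direction ($\Rightarrow$) is essentially the paper's argument: both set $U = \{u \mid A(u) > \lfpnfimcons{P^A}(u)\}$ and show that failure of case~(1) for a rule $r \in P_u$ forces case~(2), by comparing $A$ and $\lfpnfimcons{P^A}$ on the positive body of $r^A$. The paper phrases it as a direct derivation of $M(\body{r}) < M(u)$ from $\posbody{r} \cap U = \emptyset$; you phrase it as a contradiction from assuming all three cases fail. Same content.

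The backward direction ($\Leftarrow$) is where you diverge. The paper does not argue this direction in detail: it simply cites Proposition~4 of~\cite{FASP:amai}, which already establishes that $\lfpnfimcons{P^M}$ is unfounded-free, and combines this with the observation that a fixpoint of $\nfimcons{P}$ is a model. Your argument is self-contained: you introduce the external-support values $s_a$, take their maximum $s < A(a^\star)$, and bound every Kleene iterate on $U$ by $s$ via induction, using $\pretnorm \leq \min$ for rules meeting $U$ and $A_n \leq A$ for rules avoiding $U$. The appeal to left-continuity to identify $A$ with $\sup_n A_n$ is justified, since left-continuity of each $\bodyfand{r}$ makes $\nfimcons{P^A}$ $\omega$-continuous on increasing chains. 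This buys independence from the external reference and makes the mechanism explicit; the paper's version is of course shorter once the cited result is granted.
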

\begin{proof}
 Let $M$ be a model of $P$.
 In \cite{FASP:amai} it was shown that the least fixpoint of $\nfimcons{P^M}$ must necessarily be unfounded-free (Proposition~4). As any fixpoint of $\nfimcons{P}$ is a model of $P$, we only need to show that if $M$ is unfounded-free, it is the least fixpoint of $\nfimcons{P^M}$.
 
 Suppose $M \neq \lfpnfimcons{P^M}$. Then, since any unfounded-free model is a fixpoint of $\nfimcons{P^M}$ due to Lemmas~\ref{lem:notmodel-lfviolated} and ~\ref{lem:unfounded-fixpoint}, it holds that some set $M' \subset M$ exists such that $M' = \lfpnfimcons{P^M}$.
 
 Consider then $U = \{ u \in \hbase{P} \mid M'(u) < M(u) \}$. Surely $U \subseteq \supp{M}$ and hence $U \cap \supp{M} \neq \emptyset$. We now show that $U$ is unfounded with respect to $M$, leading to a contradiction. First, we show that for any atom $u \in U$ and rule $r \in P_u$ it holds that 
  \begin{equation}\Big(\posbody{r} \cap U = \emptyset\Big) \imp M(\body{r}) < M(u)\label{eq:prop:fixp-charact}\end{equation}
 as follows
  \[\begin{array}{rcll}
   & & &\posbody{r} \cap U = \emptyset\\
    &\equiv& \hint{Def.~$\cap$}& \not\Exists{l \in \posbody{r}}{l \in U}\\
    &\equiv& \hint{Duality $\forall$,$\exists$}& \Forall{l \in \posbody{r}}{l \not\in U}\\
    &\equiv& \hint{Def.~$U$}& \Forall{l \in \posbody{r}}{M(a) = M'(a)}\\
    &\imp& \hint{$M(l) = M(l^M)$}& M(\body{r^M}) = M'(\body{r^M})\\
%     &\imp \hint{Leibniz} M(\body{r^M}) \rulefand M(r^M) = M'(\body{r^M}) \rulefand M(r^M)\\
%     &\equiv \hint{Prop.~\ref{prop:support}} \minsuppint{M}{r^M} = \minsupp{M'}{r^M}{\ruleint{M}}\\
    &\imp& \hint{Monotonicity $\sup$}& M(\body{r^M}) \leq \sup_{r' \in P^M_u} M'(\body{r'})\\
    &\equiv& \hint{$M' = \nfimcons{P^M}(M')$}& M(\body{r^M}) \leq M'(u)\\
    &\imp& \hint{$u \in U$, Def.~$U$}& M(\body{r^M}) < M(u)\\
%     &\equiv \hint{$M(l) = M(l^M)$} M(\body{r}) < M(u)
  \end{array}\]
\noindent Thus, since it follows from the Definition of $r^M$ that $M(\body{r^M}) = M(\body{r})$, we have shown that (\ref{eq:prop:fixp-charact}) holds. From this equation we obtain that
  $$\Big(\posbody{r} \cap U \neq \emptyset\Big) \vee \Big(M(\body{r}) < M(u)\Big)$$
 Hence
  $$\Big(\posbody{r} \cap U \neq \emptyset\Big) \vee \Big(M(\body{r}) < M(u)\Big) \vee \Big(M(\body{r}) = 0\Big)$$
 Which means $U$ is unfounded with respect to $M$, a contradiction.
\end{proof}

% \begin{corollary}\label{cor:ansset-is-fixp}
%  Let $P$ be a FASP program. Then any answer set of $P$ is a fixpoint of $\nfimcons{P}$.
% \end{corollary}
% \begin{proof}
%  This follows immediately from Proposition~\ref{prop:fixp-charact} and Lemma~\ref{lem:notmodel-lfviolated}.
% \end{proof}

\begin{example}
Consider program $P$ and interpretations $I_1$ and $I_2$ from Example~\ref{ex:prog1}. Computing the least fixpoint of $\nfimcons{P^{I_1}}$ and $\nfimcons{P^{I_2}}$ can be done by iteratively applying these operators, starting from $\emptyset$, until we find a fixpoint. Hence for $P^{I_{1}}$ we obtain in the first iteration $J_{1} = \nfimcons{P^{I_{1}}}(\emptyset) = \{ b^{0.8} \}$. The second iteration gives $J_{2} = \nfimcons{P^{I_{1}}}(J_{1}) = \{ b^{0.8} \} = J_{1}$, hence a fixpoint, meaning $\{ b^{0.8} \} = I_{1}$ is the least fixpoint of $\nfimcons{P^{I_{1}}}$.
Iteratively applying $\nfimcons{P^{I_{2}}}$ brings us $J_{1} = \nfimcons{P^{I_{2}}}(\emptyset) = \{ b^{0.8} \}$, which is also a fixpoint of $\nfimcons{P^{I_{2}}}$.
\end{example}

\section{Completion of FASP programs}\label{sec:completion}

In this section we show how certain fuzzy answer set programs can be translated to fuzzy theories such that the models of these theories correspond to answer sets of the program and vice versa. Such a correspondence is important as it allows us to find answer sets using fuzzy SAT solvers.

\begin{definition}[Completion of a FASP program]\label{def:kcompletion}
 Let $P$ be a FASP program. The completion of $P$, denoted as $\comp{P}$, is defined as the following set of fuzzy formulas:
%  $$\{ a \fimp (\Lub \{ \body{r} \mid r \in P_a \}) \mid a \in \hbase{P} \}$$
  $$\{ a \feq (\max \{ \body{r} \mid r \in P_a \}) \mid a \in \hbase{P} \} \cup \{ \prefimp_r(\body{r},\head{r}) \mid r \in P, \head{r} \in [0,1] \}$$
%  Let $P$ be a FASP program. The completion of the body of a rule $r: a \gets \fand(b_1,\ldots,b_n,\fneg_1\,c_1,\ldots,\fneg_m\,c_m) \in P$ is the propositional formula
%   $$\comp{\body{r}} = b_1 \fand \ldots \fand b_n \fand \fneg_1 c_1 \fand \ldots \fand \fneg_m c_m$$
%  The completion of the rule $r$ is defined as
%   $$\comp{r} = \comp{\body{r}} \glb \head{r}$$
%  Assume that the aggregator is representable as a fuzzy propositonal formula, i.e.~that a proposition $\kcomp{k}{\aggr{P}}$ exists such that for any interpretation $I$ of $P$ and $k \in [0,1]$, $\aggr{P}(\ruleint{I}) \geq y$ iff $I \models \kcomp{k}{\aggr{P}}$. The \bemph{fuzzy $k$-completion} of $P$ is then defined as
%   $$\kcomp{k}{P} = \{ l \feq \Lub \{ \comp{r} \mid r \in P_l \} \mid l \in \hbase{P} \} \cup \kcomp{k}{\aggr{P}}$$
%  with $\feq$ an arbitrary fuzzy equivalence relation.
 where $\feq$ is the biresiduum of an arbitrary residual implicator, and $\prefimp_r$ is the residual implicator of the t-norm used in the body of rule $r$.
\end{definition}

The completion of a program consists of two parts, viz.~a part for the literals $\{ a \feq (\max \{ \body{r} \mid r \in P_a \}) \mid a \in \hbase{P} \}$, and a part for constraints $\{ \prefimp_r(\body{r},\head{r}) \mid r \in P, \head{r} \in [0,1] \}$. The constraints part simply ensures that all constraints are satisfied. The literal part ensures two things. By definition of the biresiduum and the fact that $\prefimp(a,b) = 1$ iff $I(a) \leq I(b)$ for any residual implicator, we have that $I(a \feq b) = 1$ iff $I(a) \leq I(b)$ and $I(b) \leq I(a)$. Hence, the literal part of the completion establishes that rules are satisfied and second that the value of the literal is not higher than what is supported by the rule bodies.

\begin{example}\label{ex:prog1-completion}
 Consider program $P$ from Example~\ref{ex:prog1}. Its completion is the following set of fuzzy propositions
 \begin{align*}
  &a \feq \pretnorm_m(b,c)\\
  &b \feq 0.8)\\
  &c \feq \pretnorm_m(a,\mneg{b})\\
  & \prefimp_l(\pretnorm_l(a,b),0)
 \end{align*}
%   & \prefimp_m(a,\pretnorm_m(b,c))\\
%   & \prefimp_m(b,0.8)\\
%   & \prefimp_m(c,\pretnorm_m(a,\fneg_{s}\,b))\\
%   & \prefimp_l(\pretnorm_l(a,b),0)
%  \end{align*}
%  where $p_{1}$, $p_{2}$, $p_{3}$, $p_{4}$ are abbreviations for the following propositional formulas:
%  \begin{align*}
%   p_{1} &= (b \fand_{m} c \fimp_{g} a)\\
%   p_{2} &= (0.8 \fimp_{g} b)\\
%   p_{3} &= (a \fand_{m} \fneg_{s\,b \fimp c})\\
%   p_{4} &= b \fand_{l} a \fimp_{l} 0
%  \end{align*}
\end{example}

Note that when applying Definition~\ref{def:kcompletion} for a literal $l$ that does not appear in the head of any rule, we get $a \feq \max \emptyset$, where we define $\max \emptyset = 0$.

We can now show that any answer set of a program $P$ is a model of its completion $\comp{P}$.

% \begin{lemma}\label{lem:fixpiffmodelcomp}
%  Let $P$ be a FASP program. Then for any interpretation $I$ of $P$ and literal $l \in \hbase{P}$ it holds that
%   $$(I(l) = \sup_{r \in P_l} I(\body{r})) \;\equiv\; (I \models l \feq \Lub \{ \comp{r} \mid r \in P_l) \}$$
% \end{lemma}
% \begin{proof}
%  Suppose $I(l) = \sup_{r \in P_l} I(\body{r})$. By definition of $\feq$ we know that $(a \feq b = 1) \equiv (a = b)$ and hence
%  $(I(l) \feq \sup_{r \in P_l} I(\body{r})) = 1$. Applying Proposition~\ref{prop:support} this is equivalent to $(I(l) \feq \sup_{r \in P_l} I(r) \rulefand I(\body{r}))) = 1$. By applying Definition~\ref{def:kcompletion} we then obtain $I \models l \feq \Lub_{r \in P_l} \comp{r}$.
% \end{proof}

\begin{proposition}\label{prop:kansset-is-modelcomp}
 Let $P$ be a FASP program and let $\comp{P}$ be its completion. Then any answer set of $P$
 is a model of $\comp{P}$.
\end{proposition}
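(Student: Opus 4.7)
The plan is to split the completion $\comp{P}$ into its two natural components, the \emph{literal part} $\{a \feq \max\{\body{r} \mid r \in P_a\} \mid a \in \hbase{P}\}$ and the \emph{constraint part} $\{\prefimp_r(\body{r},\head{r}) \mid r \in P, \head{r} \in [0,1]\}$, and verify for an arbitrary answer set $A$ of $P$ that $A$ evaluates every formula of each part to $1$.

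For the constraint part, the argument is immediate: since $A$ is an answer set it is in particular a model of $P$, so every rule $r$, including every constraint $r: c \fif \body{r}$ with $c \in [0,1]$, satisfies $A(r) = \rulefimp{r}(A(\body{r}), c) = 1$, which is exactly the formula contributed by $r$ to the constraint part of $\comp{P}$.

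For the literal part, I would invoke Lemma~\ref{lem:unfounded-fixpoint}, which already establishes that $A = \nfimcons{P}(A)$, i.e.\ $A(a) = \sup\{A(\body{r}) \mid r \in P_a\}$ for every $a \in \hbase{P}$. Because $P$ is finite, $P_a$ is finite, so the supremum coincides with the maximum; the edge case where $a$ does not appear in any rule head is handled by the conventions $\sup\emptyset = \max\emptyset = 0$, and one checks that $A(a)=0$ in that case either directly from Lemma~\ref{lem:unfounded-fixpoint} or from the unfounded-freeness of $A$ applied to $U = \{a\}$. Hence $A(a) = A(\max\{\body{r} \mid r \in P_a\})$.

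It remains to translate the equality $A(a) = A(\max\{\body{r} \mid r \in P_a\})$ into $A(a \feq \max\{\body{r}\mid r\in P_a\}) = 1$. By the residuation property~(\ref{eq:resfimpprop}), $\prefimp(x,y) = 1$ iff $x \leq y$; combined with the definition $x \feq y = \pretnorm(\prefimp(x,y), \prefimp(y,x))$ and the boundary condition $\pretnorm(1,1)=1$, this gives $x \feq y = 1$ iff $x = y$. Applying this with $x = A(a)$ and $y = A(\max\{\body{r} \mid r \in P_a\})$ yields the required evaluation, so every formula in the literal part is satisfied by $A$, and $A \models \comp{P}$. The only genuine ingredient is Lemma~\ref{lem:unfounded-fixpoint}; everything else is bookkeeping about the biresiduum and the empty-head-set edge case, which I expect to be the most delicate point only in that one must be careful that the convention $\max\emptyset = 0$ used in Definition~\ref{def:kcompletion} agrees with the supremum convention used in $\nfimcons{P}$.
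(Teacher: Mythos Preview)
Your proof is correct and follows essentially the same approach as the paper: both invoke Lemma~\ref{lem:unfounded-fixpoint} to obtain $A(a) = \sup\{A(\body{r}) \mid r \in P_a\}$ for the literal part and use the fact that $A$ is a model of $P$ for the constraint part. You have simply made explicit the bookkeeping (biresiduum equals $1$ iff both sides agree, finiteness turning $\sup$ into $\max$, the empty-head edge case) that the paper compresses into the phrase ``it then easily follows.''
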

\begin{proof}
 Suppose $A$ is an answer set of $P$. By Lemma~\ref{lem:unfounded-fixpoint}, it follows that $A$ is a fixpoint of $\nfimcons{P}$, hence for each $a \in \hbase{P}$, $A(a) = \sup \{ A(\body{r}) \mid r \in P_a \}$. By construction of $\comp{P}$ and the fact that $A$ is a model of $P$, it then easily follows that $A \models \comp{P}$.
% 
% 
% From Lemma~\ref{lem:fixpiffmodelcomp} and Proposition~\ref{lem:unfounded-fixpoint} we know that for any $l \in \hbase{P}$, $I \models l \feq \Lub_{r \in P_l} \comp{r}$. We thus only need to show that $I \models \kcomp{k}{\aggr{P}}$, which follows immediately from the fact that $\aggr{P}(\ruleint{I}) \geq k$ due to $I$ being a $k$-answer set.
\end{proof}

\begin{example}
Consider program $P$ and interpretation $I_{1}$ from Example~\ref{ex:prog1}. It is easy to see that $I_{1}$ is a model of $\comp{P}$.
\end{example}

The reverse of Proposition~\ref{prop:kansset-is-modelcomp} is not true in general, which is unsurprising because it is already invalid for classical answer set programming. The problem occurs for programs with ``loops'', as shown in the following example.

\begin{example}
 Consider program $P$ and interpretation $I_{2}$ from Example~\ref{ex:prog1}. We can easily see that $I_{2}$ is a model of $\comp{P}$, but, as we have seen in Example~\ref{ex:prog1-unfounded}, it is not an answer set of $P$.
\end{example}

One might wonder whether taking the minimal models of the completion would solve the above problem. The following example shows that the answer is negative.

\begin{example}\label{ex:progmin}
 Consider the following program $P_{\mathit{min}}$
 \begin{align*}
  a &\gets a\\
  p &\gets \pretnorm_l(\lneg{p},\lneg{a})
 \end{align*}
 \noindent The completion $\comp{P_{\mathit{min}}}$ is
 \begin{align*}
  a &\feq a\\
  p &\feq \pretnorm_l(\lneg{p},\lneg{a})
 \end{align*}
 \noindent Consider now the interpretation $I = \{a^{0.2},p^{0.4}\}$. Since $I(a) = I(a)$ and 
  $$\pretnorm_l(\lneg{I(p)},\lneg{I(a)}) = \max(0,1-I(p) + 1-I(a) -1) = 0.4$$
 we can see that $I$ is a model of $\comp{P_{\mathit{min}}}$. We show that it is a minimal model as follows. Suppose some $I' \subset I$ exists. Then we can consider three cases: (i) $I'(a) < I(a)$ and $I'(p) = I(p)$; (ii) $I'(a) = I(a)$ and $I'(p) < I(p)$; (iii) $I'(a) < I(a)$ and $I'(p) < I(p)$. In all three cases we obtain that $\pretnorm_l(\lneg{I'(p)},\lneg{I'(a)}) > 0.4 > I'(p)$, since $\lneg{I'(a)} = 1-I'(a) > 0.8$ or $\lneg{I'(p)} > 0.6$. Hence $I'$ is not a model of $\comp{P_{\mathit{min}}}$ and $I$ is thus a minimal model of $\comp{P_{\mathit{min}}}$.
 However, $I'$ is not an answer set of $P_{\mathit{min}}$ since $\lfpimcons{P_{\mathit{min}}^I} = \{a^0,p^{0.4}\}$.
\end{example}

% \begin{example}\label{ex:progmin}
%  Consider the following program $P_{min}$
%  \begin{align*}
%   a &\gets a\\
%   p &\gets \pretnorm_m(\mneg{p},\mneg{a})
%  \end{align*}
%  The set $\mathcal{M} = \{\{ a^k \} \mid k > 0, k \in [0,1]\}$ is the set of all models of $\comp{P}$ of this program. Unfortunately, none of the models $M \in \mathcal{M}$ are answer sets, as the reduct $P^M$ is
%  \begin{align*}
%   a &\gets a\\
%   p &\gets \pretnorm_m(1,0)
%  \end{align*}
%  The least fixpoint is $\lfpnfimcons{P^M} = \emptyset \neq M$, hence $M$ is not an answer set. Also note that $\emptyset$ is not a model, and therefore not an answer set of $P$, meaning also the infimum model of $\mathcal{M}$ does not provide the intended semantics.
% 
% 
%  This set has no minimal elements, meaning there is no minimal model. If we limit the considered truth values to a finite subset of $[0,1]$, there is a minimal model $M$ with $M(a) = l$, where $l$ is the minimal truth value strictly greater than $0$. However, this model is not an answer set of $P$ as the minimal model of the reduct $P^M$ is the empty set.
% \end{example}

As in the crisp case however, when a program has no loops in its positive dependency graph, the models of the completion and the answer sets coincide. First we define exactly what a loop of a FASP program is, and then we show that this property indeed still holds for FASP.

\begin{definition}[Loop]\label{def:loop}
 Let $P$ be a FASP program. The \bemph{positive dependency graph} of $P$ is a directed graph $\depgraph{P} = \tuple{\hbase{P},D}$ where $(a,b) \in D$ iff $\Exists{r \in P_a}{b \in \poslit{\body{r}}}$. For ease of notation we also denote this relation with $(a,b) \in \depgraph{P}$ for atoms $a$ and $b$ in the Herbrand base of $P$. We call a non-empty set $L \subseteq\hbase{P}$ a \bemph{loop} of $P$ iff for all literals $a$ and $b$ in $L$ there is a path (with length $> 0$) from $a$ to $b$ in $\depgraph{P}$ such that all vertices of this path are elements of $L$.
\end{definition}

\begin{example}
 Consider program $P_{min}$ from Example~\ref{ex:progmin}. The dependency graph of $P_{min}$ is pictured in Figure~\ref{fig:depgraph-prog2}. We can see that $\{a\}$ is a loop. If this loop was not in the program, its completion would become
 \begin{align*}
  a &\feq 0\\
  p &\feq \pretnorm_m(\fneg{m}{p},\fneg{m}{a})
 \end{align*}
 This fuzzy theory has no models. Since program $P_{min}$ has no answer sets, this means the answer sets coincide with the models of the completion when removing the loop.
\end{example}

\begin{example}
Consider program $P$ from Example~\ref{ex:prog1}. The dependency graph of $P$ is pictured in Figure~\ref{fig:depgraph-prog1}. We can clearly see that there is a loop between nodes $a$ and $c$. Due to this loop, the values of $a$ and $c$ are not sufficiently constrained in the completion.
\end{example}

From the preceding examples one might think that removing the loops from the program would be sufficient to make the models of the completion and the answer sets coincide. However, this is not the case, as the semantics of the program then changes, as illustrated in the following example.

\begin{example}\label{ex:progchange}
 Consider program $P_{change}$ consisting of the following rules
 \begin{align*}
  r_1: a &\gets 0.3\\
  r_2: a &\gets b\\
  r_3: b &\gets a
 \end{align*}
 Its single answer set is $\{ a^{0.3}, b^{0.3} \}$. If we remove rule $r_2$ or $r_3$, the answer set of the resulting program is $\{ a^{0.3} \}$.
\end{example}

\begin{figure}
 \centering
 \includegraphics[scale=0.50]{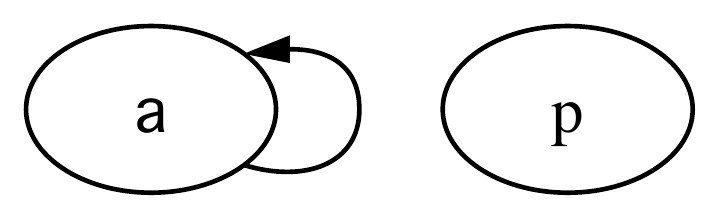}
 \caption{Dependency graph of program $P_{min}$ from Example~\ref{ex:progmin}}
 \label{fig:depgraph-prog2}
\end{figure}

\begin{figure}
 \centering
 \includegraphics[scale=0.50]{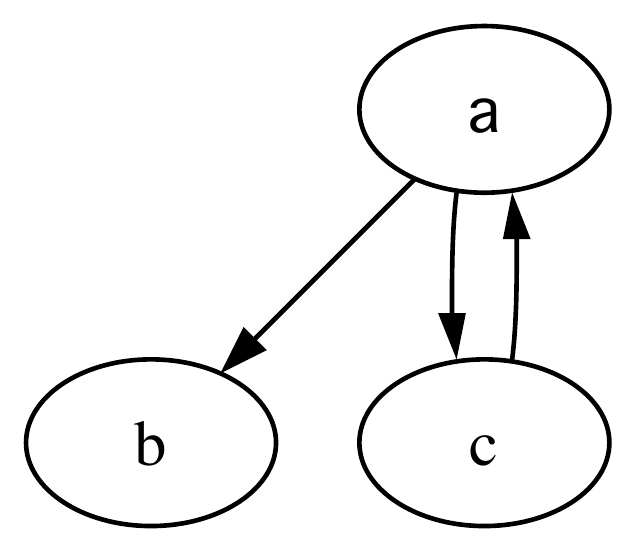}
 \caption{Dependency graph of program~$P$ from Example~\ref{ex:prog1}}
 \label{fig:depgraph-prog1}
\end{figure}

We can now show that for programs without loops the answer sets coincide with the models of their completion. We first introduce two lemmas.

\begin{lemma}\label{lem:noloop-modelcomp-is-ansset-1}
 Let $G = \tuple{V,E}$ be a directed graph with a finite set of vertices and $X \subseteq V$ with $X \neq \emptyset$. If every node in $X$ has at least one outgoing edge to another node in $X$, there must be a loop in $X$.
% If for each $x \in X$ there is some $x' \in X$ such that $(x,x') \in E$, there must be a loop in $X$.
% Then
%   $$(\Forall{x \in X}{\Exists{x' \in X}{(x,x') \in E}) \imp (\Exists{L \subseteq X}{\textnormal{$L$ is a loop in $X$}})}$$
\end{lemma}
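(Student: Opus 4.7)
The plan is to exhibit a loop inside $X$ by constructing an infinite walk and applying the pigeonhole principle. Starting from an arbitrary $x_0 \in X$, the hypothesis guarantees some $x_1 \in X$ with $(x_0,x_1) \in E$; iterating this choice produces an infinite sequence $x_0,x_1,x_2,\ldots$ of elements of $X$ such that $(x_n,x_{n+1}) \in E$ for every $n \in \Nat$.

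Since $V$ (and therefore $X$) is finite, the pigeonhole principle yields indices $i < j$ with $x_i = x_j$. I would then set $L = \{x_i, x_{i+1}, \ldots, x_{j-1}\}$, which is a non-empty subset of $X$. The sequence $x_i, x_{i+1}, \ldots, x_{j-1}, x_j = x_i$ is a closed walk in $\depgraph{P}$ all of whose vertices lie in $L$.

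To conclude, I need to verify that $L$ meets Definition~\ref{def:loop}: for every pair $a,b \in L$ there is a path of length $> 0$ from $a$ to $b$ using only vertices of $L$. Given $a = x_k$ and $b = x_l$ with $i \leq k,l \leq j-1$, one simply follows the closed walk from $x_k$ forward, wrapping around at $x_j = x_i$ if necessary, until reaching $x_l$. When $k = l$ this still gives a path of positive length by going once around the cycle, which is precisely the subtlety that the definition's ``length $> 0$'' clause demands.

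The main obstacle is a bookkeeping one rather than a conceptual one: the non-triviality of paths (length strictly greater than zero, even from a vertex to itself) must be explicitly handled, which is why the proof insists on traversing the full cycle at least once. Aside from that, the argument is a routine pigeonhole extraction of a cycle from an infinite walk in a finite graph.
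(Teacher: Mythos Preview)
Your proposal is correct and follows essentially the same approach as the paper's proof: build an infinite walk inside $X$, apply pigeonhole to obtain a repeated vertex, and extract a loop from the resulting closed walk. You are in fact more careful than the paper, which stops at ``hence there is a loop in $X$'' without explicitly verifying Definition~\ref{def:loop}; your handling of the length-$>0$ requirement (traversing the full cycle when $a=b$) is a nice addition. One minor slip: the lemma is stated for an arbitrary graph $G=\tuple{V,E}$, so your reference to $\depgraph{P}$ should read $G$.
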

\begin{proof}
 From the assumptions it holds that each $x \in X$ has an outgoing edge to another node in $X$. This means that there is an infinite sequence of nodes $x_1,x_2,\ldots$ such that $(x_i,x_{i+1}) \in E$ for $i \geq 1$. Since $X$ is finite, it follows that some vertex occurs twice in this sequence, and hence that there is a loop in $X$.
\end{proof}

\begin{lemma}\label{lem:noloop-modelcomp-is-ansset-2}
 Let $P$ be a FASP program, $I$ an interpretation of $P$ and $U \subseteq \hbase{P}$. Then if $I \models \comp{P}$ and $U$ is unfounded w.r.t.~$I$ it holds that for each $u$ in $U \cap \supp{I}$ there is some $r$ in $P_u$ such that $\posbody{r} \cap U \cap \supp{I} \neq \emptyset$. 
%   $$\Forall{u \in U \cap \supp{I}}{\Exists{r \in P_u}{\posbody{r} \cap U \cap \supp{I} \neq \emptyset}}$$
\end{lemma}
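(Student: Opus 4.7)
The plan is to fix an arbitrary $u \in U \cap \supp{I}$ (so $I(u) > 0$) and explicitly exhibit a rule $r^{*} \in P_u$ whose positive body meets $U \cap \supp{I}$. The search will proceed by first using the completion to pin down a rule whose body equals $I(u)$, and then using the definition of unfounded set to force the desired intersection.

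First I would exploit that $I \models \comp{P}$. The literal part of the completion contains the formula $u \feq \max\{\body{r} \mid r \in P_u\}$, and $I$ makes it $1$, which by the defining property of the biresiduum of a residual implicator forces $I(u) = \max\{I(\body{r}) \mid r \in P_u\}$ (with the convention $\max\emptyset = 0$). Since $I(u) > 0$, this already implies $P_u \neq \emptyset$, and because $P_u$ is finite the supremum is attained: there exists some rule $r^{*} \in P_u$ with $I(\body{r^{*}}) = I(u) > 0$.

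Second, I would apply the unfoundedness of $U$ at $u$ to this particular rule $r^{*}$. Out of the three clauses of Definition~\ref{def:unfounded}, clause~(2) fails because $I(u) = I(\body{r^{*}})$ rather than $I(u) > I(\body{r^{*}})$, and clause~(3) fails because $I(\body{r^{*}}) > 0$. So clause~(1) must hold: there exists an atom $a^{*} \in U \cap \posbody{r^{*}}$.

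The main obstacle is then showing $a^{*} \in \supp{I}$, and this is where the fact that rule bodies are built from a t-norm is essential. Because $\body{r^{*}} = \pretnorm_{r^{*}}(b_1,\ldots,b_n)$ with $I(\body{r^{*}}) > 0$, I would invoke the boundary property of any t-norm, $\pretnorm(0,x)=0$ (immediate from monotonicity and $\pretnorm(1,0)=0$), iterated over the arguments, to conclude that $I(b_i) > 0$ for every extended literal $b_i$ appearing in $\body{r^{*}}$; in particular $I(a^{*}) > 0$, so $a^{*} \in \supp{I}$. Hence $\posbody{r^{*}} \cap U \cap \supp{I} \neq \emptyset$, completing the proof. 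I expect this last t-norm step to be the genuinely load-bearing one, since it is precisely the reason Section~\ref{sec:restrict} explains that t-conorms in the body would break the argument.
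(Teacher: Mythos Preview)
Your proof is correct and follows essentially the same approach as the paper: use the completion to find a rule $r^{*}\in P_u$ with $I(\body{r^{*}})=I(u)>0$, observe that this rules out clauses~(2) and~(3) of the unfounded-set definition so clause~(1) must hold, and then use the t-norm boundary property $\pretnorm(0,x)=0$ to conclude $\posbody{r^{*}}\subseteq\supp{I}$. The paper presents the same three steps in the same order, so there is nothing to add.
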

\begin{proof}
 Assume that $u \in U \cap \supp{I}$, in other words $u \in U$ and $I(u) > 0$. As $U$ is unfounded w.r.t.~$I$, for each $r \in P_u$ it holds that either\\
%   \begin{enumerate}
% %   \begin{itemize}
   \phantom{xx}1. $\posbody{r} \cap U \neq \emptyset$; or\\
   \phantom{xx}2. $I(\body{r}) < I(u)$; or\\
   \phantom{xx}3. $I(\body{r}) = 0$\\
%   \end{enumerate}
%   \end{itemize}
  We can now show that there is at least one rule $r \in P_u$ that violates the second and third of these conditions, meaning it must satisfy the first. 
  
  From $I \models \comp{P}$ we know by construction of $\comp{P}$ that for each $u \in U$, $I(u) = \sup \{ I(\body{r}) \mid r \in P_u\}$. Hence for each $u \in U$ there is a rule $r \in P_u$ such that $I(u) = I(\body{r})$, thus the second condition is violated. Since $I(u) > 0$, it then also follows that the third condition is violated.
  
  In other words there must be some $r \in P_u$ such that $I(\body{r}) = I(u)$ and $I(\body{r}) \neq 0$. Since $U$ is unfounded w.r.t.~$I$, this means that $\posbody{r} \cap U \neq \emptyset$. Since $I(\body{r}) \neq 0$ implies that $\posbody{r} \subseteq \supp{I}$ due to the fact that $\pretnorm(0,x) = 0$ for any t-norm $\pretnorm$, we can conclude that there is some $r \in P_u$ such that $\posbody{r} \cap U \cap \supp{I} \neq \emptyset$.
\end{proof}

Using these lemmas we can now show that the answer sets of any program without loops in its dependency graph coincide with the models of its completion. This resembles Fages' theorem on tight programs in classical ASP \cite{fages:completion}.

\begin{proposition}\label{prop:noloops-ansset-is-modelcomp}
 Let $P$ be a FASP program. If $P$ has no loops in its positive dependency graph it holds that an interpretation $I$ of $P$ is
 an answer set of $P$ iff $I \models \comp{P}$.
\end{proposition}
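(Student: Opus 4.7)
My plan is to dispose of the ``only if'' direction by invoking Proposition~\ref{prop:kansset-is-modelcomp} directly, since it gives us exactly that any answer set of $P$ is a model of $\comp{P}$ regardless of whether $P$ has loops. All the work therefore lies in the ``if'' direction: given a model $I$ of $\comp{P}$, I must show that $I$ is an answer set of $P$. By Proposition~\ref{prop:fixp-charact}, it suffices to establish $I = \lfpnfimcons{P^I}$.

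The first step is to observe that $I$ is at least \emph{some} fixpoint of $\nfimcons{P^I}$. By the construction of $\comp{P}$, the biresiduum formulas force $I(a) = \sup\{I(\body{r}) \mid r \in P_a\} = \nfimcons{P}(I)(a)$ for every $a \in \hbase{P}$, and the constraint formulas guarantee that every rule with a constant in its head is satisfied; together these yield $I = \nfimcons{P}(I)$, and Lemma~\ref{lem:notmodel-lfviolated} lifts this to $I = \nfimcons{P^I}(I)$. Thus the remaining task is to rule out the existence of a strictly smaller fixpoint.

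Suppose for contradiction that $I' = \lfpnfimcons{P^I} \subsetneq I$, and put $U = \{u \in \hbase{P} \mid I'(u) < I(u)\}$. Then $U$ is non-empty and $U \subseteq \supp{I}$. I claim $U$ is unfounded with respect to $I$: fix $u \in U$ and $r \in P_u$, and suppose $\posbody{r} \cap U = \emptyset$. Then $I$ and $I'$ agree on every positive extended literal of $\body{r}$; and since negative extended literals in $\body{r}$ are replaced in $r^I$ by their $I$-values, we obtain $I(\body{r}) = I(\body{r^I}) = I'(\body{r^I})$. Because $I'$ is a fixpoint of $\nfimcons{P^I}$, $I'(u) \geq I'(\body{r^I})$, and by the definition of $U$, $I'(u) < I(u)$. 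Chaining these inequalities gives $I(\body{r}) < I(u)$, so the second clause of Definition~\ref{def:unfounded} holds whenever the first fails, proving the claim.

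Now I apply Lemma~\ref{lem:noloop-modelcomp-is-ansset-2} to the unfounded set $U$: for every $u \in U$ (which lies in $U \cap \supp{I}$), there exists $r \in P_u$ with $\posbody{r} \cap U \cap \supp{I} \neq \emptyset$, i.e.\ some atom of $\posbody{r}$ is in $U$. This is precisely an outgoing edge in $\depgraph{P}$ from $u$ to another vertex of $U$, so every vertex of $U$ has an outgoing edge staying inside $U$. Since $\hbase{P}$ is finite, Lemma~\ref{lem:noloop-modelcomp-is-ansset-1} then produces a loop of $P$ entirely contained in $U$, contradicting the hypothesis that $\depgraph{P}$ is loop-free. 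The main obstacle I anticipate is the unfoundedness argument itself: one must carefully exploit the reduct, since the negative extended literals in $\body{r}$ are responsible for any discrepancy between $I(\body{r})$ and $I'(\body{r})$, and the whole proof collapses without the key identity $I(\body{r}) = I'(\body{r^I})$ when $\posbody{r} \cap U = \emptyset$.
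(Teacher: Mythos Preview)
Your proof is correct, but it takes a detour the paper avoids. The paper works directly with the unfounded-free definition of answer set: assuming $I$ is not unfounded-free, it takes an \emph{arbitrary} unfounded set $U$ with $U\cap\supp{I}\neq\emptyset$ and immediately applies Lemma~\ref{lem:noloop-modelcomp-is-ansset-2} and then Lemma~\ref{lem:noloop-modelcomp-is-ansset-1} to produce a loop. You instead pass through the fixpoint characterisation (Proposition~\ref{prop:fixp-charact}), build the specific set $U=\{u\mid I'(u)<I(u)\}$ from the discrepancy between $I$ and $\lfpnfimcons{P^I}$, and then verify by hand that this $U$ is unfounded---an argument that essentially replays the second half of the proof of Proposition~\ref{prop:fixp-charact}---before feeding it into the same two lemmas. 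The paper's route is shorter because the hypothesis ``$I$ is not unfounded-free'' already hands you a usable $U$; there is no need to touch $\nfimcons{P^I}$ or reconstruct unfoundedness. Your route does have the minor advantage that your $U$ automatically sits inside $\supp{I}$, so you never have to intersect with $\supp{I}$ explicitly, but this is cosmetic.
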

\begin{proof}
 We already know from Proposition~\ref{prop:kansset-is-modelcomp} that any answer set of $P$ is necessarily a model of $\comp{P}$, hence we only need to show that every model of $\comp{P}$ is an answer set of $P$ under the conditions of this proposition. As $I \models \comp{P}$, it holds that $I$ is a model of $P$. We show by contradiction that $I$ is unfounded-free. Assume that there is a set $U \subseteq \hbase{P}$ such that $U$ is unfounded w.r.t.~$I$ and $U \cap \supp{I} \neq \emptyset$. From Lemma~\ref{lem:noloop-modelcomp-is-ansset-2} we know that for each $u \in U \cap \supp{I}$ it holds that there is some rule $r \in P_u$ such that $\posbody{r} \cap U \cap \supp{I} \neq \emptyset$. Using the definition of $\depgraph{P}$, this means that for each such $u$ there is some $u' \in U \cap \supp{I}$ such that $\depgraph{P}(u,u')$. This however means that there is a loop in $\depgraph{P}$ by Lemma~\ref{lem:noloop-modelcomp-is-ansset-1}, contradicting the assumption.
\end{proof}

Hence, finding the answer sets of a program with no loops in its positive dependency graph can be done by finding models of its completion.

\section{Loop Formulas}\label{sec:loopelimination}

As mentioned in the previous section, sometimes the models of the completion are not answer sets. In this section, we investigate how the solution that has been proposed for boolean answer set programming, viz.~adding loop formulas to the completion \cite{assat-linzhao}, can be extended to fuzzy answer set programming.

For this extension, we start from a partition of the rules whose heads are in some particular loop $L$. Based upon this partition, for every loop $L$ we define a formula in fuzzy logic, such that any model of the completion satisfying these formulas is an answer set.

For any program $P$ and loop $L$ we consider the following partition of the rules in $P$ whose head belongs to the set $L$ (due to \cite{assat-linzhao})
 \begin{eqnarray}
  \looprules{P}{L} &= \{ a \gets B \mid ((a \gets B) \in P) \wedge (a \in L) \wedge (B^+ \cap L \neq \emptyset) \}\\
  \nonlooprules{P}{L} &= \{ a \gets B \mid ((a \gets B) \in P) \wedge (a \in L) \wedge (B^+ \cap L = \emptyset) \}
 \end{eqnarray}

Note that this partition only takes the positive occurrences of atoms in the loop into account.
Intuitively, the set $\looprules{P}{L}$ contains the rules that are ``in'' the loop $L$, i.e.~the rules that are jointly responsible for the creation of the loop in the positive dependency graph, whereas the rules in $\nonlooprules{P}{L}$ are the rules that are outside of this loop. We will refer to them as ``loop rules'', resp.~``non-loop rules.''

\begin{example}
Consider program $P$ from Example~\ref{ex:prog1}. It is clear that for the loop $L = \{a,c\}$ the set of loop rules is $\looprules{P}{L} = \{ r_{1},r_{3} \}$ and the set of non-loop rules is $\nonlooprules{P}{L} = \emptyset$.
\end{example}

% All literals in the support of an answer set must depend, either directly, or indirectly, on at least one non-loop rule. Therefore, as for \cite{assat-linzhao}, this motivates the use of \bemph{loop formulas} to eliminate models of the completion that violate this condition.

\begin{example}
Consider program $P$ from Example~\ref{ex:prog1} with interpretations $I_{1}$ and $I_{2}$ from Example~\ref{ex:prog1} once again. 
It is clear that in $I_1$ no loop rules were used to derive the values of $a$ and $c$, whereas in $I_2$ only loop rules are used. 
\end{example}

% Hence there is a problem when the values of literals in a loop are supported by other literals in the loop. This is the case when their value is only supported by loop rules, as the support of these rules is by definition always based on literals in the loop. Hence, to solve this problem, we should require that at least one non-loop rule motivates the value of the loop literals. Only one rule's support is needed as this support can then propagate through the loop.

Hence there is a problem when the value of literals in a loop are only derived from rules in the loop. To solve this problem, we should require that at least one non-loop rule motivates the value of these loop literals. As illustrated in the next example, one non-loop rule is sufficient as the value provided by this rule can propagate through the loop by applying loop rules.

\begin{example}\label{ex:prog2}
 Consider program $P_{change}$ from Example~\ref{ex:progchange} again.
 Clearly this program has a loop $L = \{ a, b \}$ with $\looprules{P}{L} = \{ r_{2}, r_{3} \}$ and $\nonlooprules{P}{L} = \{ r_{1} \}$. Consider then interpretations $I_{1} = \{ a^{0.3}, b^{0.3} \}$ and $I_{2} = \{ a^{1}, b^{1} \}$. We can easily see that $I_{1}$ is an answer set of $P$, whereas $I_{2}$ is not, although they are both models of $\comp{P}$. The problem is that in $I_{2}$ the values of $a$ and $b$ are higher than what can be derived from the non-loop rule $r_1$, whereas in $I_1$ their values are exactly what can be justified from applying rule $r_1$.
 The latter is allowed, as values are properly supported from outside the loop, while the former is not, as in this case the loop is ``self-motivating''.
\end{example}

To remove the non-answer set models of the completion, we add loop formulas to the completion, defined as follows.

\begin{definition}[Loop Formula]\label{def:loopformula}
 Let $P$ be a FASP program and $L = \{ l_1,\ldots,l_m \}$ a loop of $P$. Suppose that $\nonlooprules{P}{L} = \{ r_1,\ldots,r_n \}$. Then the loop formula induced by loop $L$, denoted by $\loopform{L}{P}$, is the following fuzzy logic formula:
  \begin{equation}\prefimp(\max(l_1,\ldots,l_m),\max(\body{r_1},\ldots,\body{r_n})\label{eq:loopforms}\end{equation}
 where $\prefimp$ is an arbitrary residual implicator. If $\nonlooprules{P}{L} = \emptyset$, the loop formula becomes
  $$\prefimp(\max(l_1,\ldots,l_m),0)$$
\end{definition}

 The loop formula proposed for boolean answer set programs in \cite{assat-linzhao} is of the form
  \begin{equation}\neg(\bigwedge \body{r_{1}} \vee \ldots \vee \bigwedge \body{r_{n}} ) \imp (\neg l_1 \wedge \ldots \wedge \neg l_m)\label{eq:loopforms-2}\end{equation}
  
It can easily be seen that (\ref{eq:loopforms}) is a straightforward generalisation of (\ref{eq:loopforms-2}) as the latter is equivalent to
 $$(l_1 \vee \ldots \vee l_m) \imp (\bigwedge \body{r_{1}} \vee \ldots \vee \bigwedge \body{r_{n}})$$
Note that this equivalence is preserved in \L ukasiewicz logic, but not in G\"odel or product logic.
 Furthermore, since $I \models \prefimp(\max(l_1,\ldots,l_m),0)$ only when $\max(I(l_1),$ $\ldots,$ $I(l_m))$ $\leq 0$, it is easy to see that in the case where $\nonlooprules{P}{L} = \emptyset$, the truth value of all atoms in the loop $L$ is $0$.
 
\begin{example}
 Consider program $P$ and interpretations $I_1$ and $I_2$ from Example~\ref{ex:prog1}. The loop formula for its loop $L = \{ a,c \}$ is the fuzzy formula $\prefimp_m(\max(a,c),0)$, since $\nonlooprules{P}{L} = \emptyset$. It is easy to see that $I_{2}$ does not satisfy this formula, while interpretation $I_{1}$ does. 
\end{example}

\begin{example}
 Consider program $P_{change}$ from Example~\ref{ex:progchange}. The loop formula for its loop $L = \{ a,b \}$ is the propositional formula $\prefimp_m(\max(a,b),0.3)$, since $\nonlooprules{P}{L} = \{ r_{1} \}$. Again we see that interpretation $I_{1}$ from Example~\ref{ex:prog2} satisfies this loop formula, whereas interpretation $I_{2}$ from the same example does not.
\end{example}
 
 We now show that by adding loop formulas to the completion of a program, we get a fuzzy propositional theory that is both sound and complete with respect to the answer set semantics. First we show that this procedure is complete.
 
%  \begin{lemma}\label{lem:loopforms-complete}
%     Let $P$ be a FASP program. Then for any interpretation $I$ of $P$, rule $r \in P$ and literal $l \in \posbody{r}$ we have that
%     $\minsuppint{I}{r} \leq I(l)$.
%  \end{lemma}
%  \begin{proof}
%   Suppose $I$ is an interpretation of $P$, $r$ is a rule in $P$ and $l$ is a literal in $\posbody{r}$. From Proposition~\ref{prop:support}, the fact that $x \glb y \leq x$ and the fact that for any t-norm $\fand$ we also have that $x \fand y \leq x$ we obtain
%    \begin{align*}
%     \minsuppint{I}{r} & = I(\body{r}) \glb I(\head{r})\\
%       & \leq I(\body{r})\\
%       & \leq I(l)
%    \end{align*}
%  \end{proof}

\begin{proposition}[Completeness]\label{prop:loopforms-complete}
 Let $P$ be a FASP program, let $\mathcal{L}$ be the set of all loops of $P$, and define $\loopformprog{P} = \{ \loopform{L}{P} \mid L \in \mathcal{L} \}$.  For any answer set $I$ of $P$, it holds that $I \models \loopformprog{P} \cup \comp{P}$. 
\end{proposition}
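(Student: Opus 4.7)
The plan is to combine Proposition~\ref{prop:kansset-is-modelcomp} (which already delivers $I \models \comp{P}$) with a loop-by-loop argument that each $\loopform{L}{P}$ is satisfied by $I$. For a fixed loop $L = \{l_1,\ldots,l_m\}$ with $\nonlooprules{P}{L} = \{r_1,\ldots,r_n\}$, I would set $\mu = \max_{j} I(\body{r_j})$, with the convention $\mu = 0$ when $n = 0$. By the residuation property~\eqref{eq:resfimpprop}, showing that $I$ satisfies $\prefimp(\max(l_1,\ldots,l_m), \max(\body{r_1},\ldots,\body{r_n}))$ (or $\prefimp(\max(l_1,\ldots,l_m), 0)$) reduces to proving $\max_i I(l_i) \leq \mu$. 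So the whole proof collapses to this single inequality.

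I would then argue by contradiction: suppose some $l_i \in L$ has $I(l_i) > \mu$, and define $U = \{ u \in L \mid I(u) > \mu \}$. Since $I$ is an answer set, in particular it is unfounded-free, so it will suffice to show that $U$ is unfounded w.r.t.~$I$ (this then contradicts $U \cap \supp{I} \neq \emptyset$, which is immediate from $I(l_i) > \mu \geq 0$). Fix $u \in U$ and $r \in P_u$; I must verify one of the three clauses of Definition~\ref{def:unfounded}. If $I(\body{r}) \leq \mu$, then $I(\body{r}) < I(u)$ and clause~(2) holds. Otherwise $I(\body{r}) > \mu$, and then $r$ cannot lie in $\nonlooprules{P}{L}$ (its body value would be at most $\mu$ by definition of $\mu$), hence $r \in \looprules{P}{L}$ and therefore $\posbody{r} \cap L \neq \emptyset$. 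Pick $v \in \posbody{r} \cap L$; since any t-norm is bounded above by the minimum of its arguments, $I(v) \geq I(\body{r}) > \mu$, so $v \in U$, and clause~(1) holds. This establishes that $U$ is unfounded.

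The main obstacle, and the reason the argument works, is the step that upgrades membership of $v$ in $L$ to membership in $U$: this exploits the inequality $\pretnorm(x_1,\ldots,x_k) \leq \min_i x_i$, which is why the construction is tied to t-norms (foreshadowing the restriction discussed in Section~\ref{sec:restrict}). The case $\nonlooprules{P}{L} = \emptyset$ is not separate: with $\mu = 0$, the displayed loop formula $\prefimp(\max(l_1,\ldots,l_m),0)$ is exactly what~\eqref{eq:resfimpprop} forces, and the unfoundedness argument above goes through unchanged because clause~(3) $I(\body{r}) = 0$ subsumes the ``$I(\body{r}) \leq \mu = 0$'' sub-case. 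Combining this with $I \models \comp{P}$ yields $I \models \loopformprog{P} \cup \comp{P}$, as required.
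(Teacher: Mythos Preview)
Your proof is correct and follows essentially the same route as the paper: assume a loop formula is violated, set $U=\{u\in L\mid I(u)>\mu\}$ with $\mu=\sup_{r\in\nonlooprules{P}{L}}I(\body{r})$, and verify that $U$ is unfounded via the t-norm bound $\pretnorm(x_1,\ldots,x_k)\leq\min_i x_i$, contradicting unfounded-freeness. Your two-case split on $I(\body{r})\leq\mu$ versus $I(\body{r})>\mu$ is in fact a slightly cleaner packaging of the paper's three-way case analysis, but the underlying idea is identical.
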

\begin{proof}
 Suppose $I$ is an answer set of $P$ and $I \not\models \loopformprog{P} \cup \comp{P}$. Since any answer set is a model of $\comp{P}$ according to Proposition~\ref{prop:kansset-is-modelcomp}, this means that $I \not\models \loopformprog{P}$. Hence, the loop formula of some loop $L$ in $P$ is not fulfilled; this means:
  $$\sup_{u \in L} I(u) > \sup_{r \in \nonlooprules{P}{L}} I(\body{r})$$
 Consider then the set $U = \{ u \in L \mid I(u) > \sup_{r \in \nonlooprules{P}{L}} I(\body{r}) \}$.
 We show that $U$ is unfounded w.r.t.~$I$, i.e.~we show that for each $u \in U$ and rule $r \in P_u$, at least one of the conditions of Definition~\ref{def:unfounded} applies.
 
 Since $P_u = \looprules{P_u}{L} \cup \nonlooprules{P_u}{L}$, each rule $r \in P_u$ must either be in $\looprules{P_u}{L}$ or in $\nonlooprules{P_u}{L}$. We consider the following cases:\\
%   \begin{enumerate}
   \phantom{xx}1.~If $r \in \nonlooprules{P_u}{L}$ then by construction of $U$ it holds that $I(\body{r}) < I(u)$.\\
  \phantom{xx}2.~If $r \in \looprules{P_u}{L}$ and $I(\body{r}) \leq \sup_{r' \in \nonlooprules{P_u}{L}} I(\body{r'})$, by construction of $U$ we have that $I(\body{r}) < I(u)$.\\
   \phantom{xx}3.~Suppose $r \in \looprules{P_u}{L}$ and $I(\body{r}) > \sup_{r' \in \nonlooprules{P_u}{L}} I(\body{r'})$. Since $\pretnorm(x,y) \leq \min(x,y)$ for each t-norm $\pretnorm$, we know that $I(\body{r}) \leq I(l)$ for each $l \in \posbody{r}$. Hence for each $l \in \posbody{r}$ we have $I(l) > \sup_{r' \in \nonlooprules{P_u}{L}} I(\body{r'})$. This means that, since $r \in \looprules{P}{L}$ and thus $\posbody{r} \cap L \neq \emptyset$, we know from the definition of $U$ that $\posbody{r} \cap U \neq \emptyset$.
%   \end{enumerate}

Now remark that $U \cap \supp{I} \neq \emptyset$ as $U \subseteq \supp{I}$ due to $I(u) > 0$ for each $u \in U$.
From the above we can thus conclude that $U$ is unfounded w.r.t.~$I$, and since $U \cap \supp{I} \neq \emptyset$, that $I$ is not unfounded-free: a contradiction.
\end{proof}

Second we show that adding the loop formulas to the completion of a program is a sound procedure.
% \todo{Notation for positive literalset is different here!!}

\begin{lemma}\label{lem:loopforms-sound-1}
 Let $G = \tuple{V,E}$ be a directed graph and $X \subseteq V$, with $V$ finite, such that each node of $X$ has at least one outgoing edge to another node in $X$. Then there is a set $L \subseteq X$ such that $L$ is a maximal loop in $X$ and for each $l \in L$ we have that there is no $x \in X \setminus L$ for which $(l,x) \in E$.
\end{lemma}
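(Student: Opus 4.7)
The plan is to reason via the strongly connected components (SCCs) of the subgraph $G[X]$ of $G$ induced by $X$. Since $V$ (and hence $X$) is finite, the condensation of $G[X]$ is a finite DAG, so it must contain at least one sink SCC; I would take $L$ to be such a sink and then claim it satisfies both required properties.

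First I would verify the edge-exit condition. Because $L$ is a sink in the condensation, any edge of $G[X]$ emanating from a vertex of $L$ must have its head in $L$ as well; equivalently, no $l \in L$ has an edge $(l,x)\in E$ with $x \in X \setminus L$.

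Next I would argue that $L$ qualifies as a loop in the sense of Definition~\ref{def:loop}. When $|L| \geq 2$ this is immediate from the definition of SCC, since any two vertices of $L$ are then joined by a directed path of length at least $1$ lying entirely within $L$. The only subtle case, and in my view the main (minor) obstacle, is $|L|=1$, say $L=\{l\}$: a path of length ${>}0$ from $l$ to $l$ within $\{l\}$ requires the self-loop $(l,l) \in E$. Here I would invoke the hypothesis that $l$ has at least one outgoing edge to some vertex of $X$, which by the sink property cannot leave $L$, and so must be exactly the self-loop $(l,l)$.

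Finally, maximality among loops contained in $X$ follows easily: any loop $L' \subseteq X$ with $L \subseteq L'$ is a strongly connected subset of $G[X]$, hence contained in a single SCC of $G[X]$; since $L$ is itself an SCC, $L' = L$. Thus $L$ is a maximal loop of $X$ satisfying the stated edge-exit property.
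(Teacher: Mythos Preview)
Your proof is correct and follows essentially the same approach as the paper: both form a quotient DAG on $X$ (you via the SCC condensation of $G[X]$, the paper via maximal loops together with the remaining singleton nodes) and then pick a sink/leaf, arguing it cannot be a non-loop singleton because of the outgoing-edge hypothesis. Your use of standard SCC/condensation language is slightly cleaner, but the underlying argument is the same.
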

\begin{proof}
 From Lemma~\ref{lem:noloop-modelcomp-is-ansset-1} we already know that there must be a loop in $X$. Hence, there must also be a maximal loop in $X$. First, remark that maximal loops must of course be disjoint as otherwise their union would form a bigger loop.
 Consider then the set $X$, which is a collection of disjoint maximal loops $L$ and remaining nodes $S$ (single nodes that are not in any loop). There is an induced graph $G'$ of $G$ with nodes $S \cup L$ (i.e.~each maximal loop is a single node in the induced graph) and edges $E$ induced as usual (i.e.~$(L_1,L_2) \in E$ if for some node $l_1$ in $L_1$ there is a node $l_2$ in $L_2$ such that $(l_1,l_2) \in E$ and likewise for the nodes in $S$). Clearly, $G'$ is acyclic as otherwise the nodes in $G'$ on the cycle would create a bigger loop in $X$. Hence, $G'$ has leafs without outgoing edges. However, a leaf cannot be in $S$ since that would imply a node in $X$ without an outgoing edge. Thus we can conclude that all leafs in $G'$ are maximal loops in $X$.
\end{proof}

\begin{proposition}[Soundness]\label{prop:loopforms-sound}
 Let $P$ be a FASP program and let $\loopformprog{P}$ be the set of all loop formulas of $P$. Then for any interpretation $I$ of $P$ it holds that if $I \models \loopformprog{P} \cup \comp{P}$, then $I$ must be an answer set of $P$.
\end{proposition}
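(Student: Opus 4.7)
The plan is to argue by contradiction: assume $I \models \comp{P} \cup \loopformprog{P}$ yet $I$ is not unfounded-free, so some unfounded set $U \subseteq \hbase{P}$ satisfies $U \cap \supp{I} \neq \emptyset$. I would set $X = U \cap \supp{I}$ and first apply Lemma~\ref{lem:noloop-modelcomp-is-ansset-2}, which is applicable because $I \models \comp{P}$, to conclude that every atom in $X$ has an outgoing edge in $\depgraph{P}$ to another atom in $X$. Lemma~\ref{lem:loopforms-sound-1} then supplies a maximal loop $L \subseteq X$ with the key property that no vertex of $L$ has an edge in $\depgraph{P}$ to any vertex of $X \setminus L$. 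The goal becomes to show that $I$ violates precisely the loop formula $\loopform{L}{P}$, delivering the desired contradiction.

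To do so, pick $l^* \in L$ with $I(l^*) = \max_{l \in L} I(l)$, noting that $I(l^*) > 0$ because $L \subseteq \supp{I}$. The heart of the argument is establishing that $I(\body{r}) < I(l^*)$ for every $r \in \nonlooprules{P}{L}$. Fix such a rule $r$ with head $u \in L$. Since $U$ is unfounded w.r.t.~$I$, one of the three conditions of Definition~\ref{def:unfounded} must hold for $r$: (i) $\posbody{r} \cap U \neq \emptyset$, (ii) $I(\body{r}) < I(u)$, or (iii) $I(\body{r}) = 0$. Cases (ii) and (iii) yield $I(\body{r}) < I(l^*)$ at once from $I(u) \leq I(l^*)$ and $I(l^*) > 0$. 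In case (i), I would split on whether $I(\body{r}) = 0$: if yes, we are done; if $I(\body{r}) > 0$, then $\posbody{r} \subseteq \supp{I}$ (since every t-norm satisfies $\pretnorm(0,x) = 0$), so $\posbody{r} \cap U \cap \supp{I} = \posbody{r} \cap X$ is nonempty and contains some $b$. But then $(u,b) \in \depgraph{P}$ with $u \in L$ and $b \in X$, so the defining property of $L$ forces $b \in L$, contradicting $\posbody{r} \cap L = \emptyset$ from $r \in \nonlooprules{P}{L}$.

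Because $\nonlooprules{P}{L}$ is finite, these strict bounds combine to give $\max_{r \in \nonlooprules{P}{L}} I(\body{r}) < I(l^*) = \max_{l \in L} I(l)$; when $\nonlooprules{P}{L} = \emptyset$ the left-hand side is taken to be $0$ and the strict inequality still holds because $I(l^*) > 0$. By the residuation property stated in equation~(\ref{eq:resfimpprop}), this inequality forces $I(\loopform{L}{P}) < 1$, whence $I \not\models \loopformprog{P}$, contradicting the hypothesis.

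The main obstacle is the bookkeeping in case (i) of the unfounded-set definition: one has to combine the no-outgoing-edge property of the leaf loop $L$ from Lemma~\ref{lem:loopforms-sound-1} with the support-propagation fact $\pretnorm(0,x) = 0$ to exclude the possibility that a non-loop rule has $I(\body{r}) > 0$ witnessed by some atom of $U \setminus L$. Without ruling this out, $I(\body{r})$ could conceivably reach $I(l^*)$ and the loop formula for $L$ could be satisfied despite $I$ failing to be unfounded-free, so this step is exactly where the careful selection of $L$ via Lemma~\ref{lem:loopforms-sound-1} pays off.
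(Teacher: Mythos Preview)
Your proposal is correct and follows essentially the same route as the paper: both argue by contradiction, apply Lemma~\ref{lem:noloop-modelcomp-is-ansset-2} to $X = U \cap \supp{I}$, extract a leaf loop $L \subseteq X$ via Lemma~\ref{lem:loopforms-sound-1}, and then show that every rule in $\nonlooprules{P}{L}$ has body value strictly below $\max_{l \in L} I(l)$. The only cosmetic difference is that the paper first derives the implication $\bigl(L \cap \posbody{r} = \emptyset\bigr) \Rightarrow \bigl(I(\body{r}) < I(\head{r})\bigr)$ for all $r \in P_l$ and then specializes to $\nonlooprules{P}{L}$, whereas you case directly on the unfounded conditions for non-loop rules and compare against the single maximizer $l^*$; the underlying argument is the same.
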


\begin{proof}
 Suppose $I \models \loopformprog{P} \cup \comp{P}$ and $I$ is not an answer set of $P$. Since any model of $\comp{P}$ must be a model of $P$, this must mean that $I$ is not unfounded-free, i.e.~that there exists a set $U \subseteq \hbase{P}$ such that $U$ is unfounded w.r.t.~$I$. From Lemma~\ref{lem:noloop-modelcomp-is-ansset-2} we know that for each $u \in U \cap \supp{I}$ there must be some $r \in P_u$ such that $\posbody{r} \cap U \cap \supp{I} \neq \emptyset$. Hence, by definition of $G_P$ this means that for each $u \in U \cap \supp{I}$ there is some $u' \in U \cap \supp{I}$ such that $(u,u') \in G_P$. Using Lemma~\ref{lem:loopforms-sound-1} this means that there is a set $L \subseteq U \cap \supp{I}$ such that $L$ is a loop in $P$ and for each $l \in L$ there is no $u \in (U \cap \supp{I})\setminus L$ such that $(l,u) \in G_P$. In other words, for each $l \in L$ and rule $r \in P_l$ we have that
  \begin{equation}\Big(U \cap \supp{I} \cap \posbody{r} \neq \emptyset\Big) \imp \Big(L \cap \posbody{r} \neq \emptyset\Big)\label{eq:prop:loopforms-sound-1}\end{equation}

%  if $U \cap \supp{I} \cap \posbody{r} \neq \emptyset$, it must hold that $L \cap \posbody{r} \neq \emptyset$.
 Now, consider $l \in L$. Since $L \subseteq U \cap \supp{I}$, we know that $I(l) > 0$. Hence, if $I(\body{r}) = I(l)$ for some rule $r \in P_l$, we know that $I(\body{r}) > 0$. As $U$ is unfounded w.r.t.~$I$, it follows from Definition~\ref{def:unfounded} that $L \cap \posbody{r} \neq \emptyset$.

 Using contraposition, this means that for each $l \in L$ and $r \in P_l$ we have that
  \begin{equation}\Big(L \cap \posbody{r} = \emptyset\Big) \imp \Big(I(\body{r}) \neq I(l)\Big)\label{eq:prop:loopforms-sound-2}\end{equation}

 By the definition of $\comp{P}$, however, we know that for each model of $\comp{P}$ and for each $a \in \hbase{P}$ and $r \in P_a$ we have $I(a) \geq I(\body{r})$. Hence for each $l \in L$ and $r \in P_l$ from (\ref{eq:prop:loopforms-sound-2}) we have that
  \begin{equation}\Big(L \cap \posbody{r} = \emptyset\Big) \imp \Big(I(\body{r}) < I(l)\Big)\label{eq:prop:loopforms-sound-3}\end{equation}
 Now, for each $l \in L$ and $r \in \nonlooprules{P}{L} \cap P_l$ by definition of $\nonlooprules{P}{L}$ it holds that $L \cap \posbody{r} = \emptyset$, meaning $I(\body{r}) < I(l)$. Thus, $\sup \{ I(\body{r}) \mid r \in \nonlooprules{P}{L} \} < \sup \{ I(l) \mid l \in L \}$, meaning $I \not\models \loopform{L}{P}$, a contradiction.
\end{proof}

A straightforward procedure for finding answer sets would now be to extend the completion of a program with all possible loop formulas and let a fuzzy SAT solver generate models of the resulting fuzzy propositional theory. The models of this theory are the answer sets of the program, as ensured by Propositions~\ref{prop:loopforms-complete} and \ref{prop:loopforms-sound}. As there may be an exponential number of loops, however, this translation is not polynomial in general. A similar situation arises for classical ASP. The solution proposed in \cite{assat-linzhao} overcomes this limitation by iteratively adding loop formulas. In particular, a SAT solver is first used to find a model of the completion of a classical ASP program. Then it is checked in polynomial time whether this model is an answer set. If this is not the case, a loop formula, which is not satisfied by the model that was found, is added to the completion. The whole process is then repeated until an answer set is found. We will show that a similar procedure can be used to find answer sets of a FASP program.  

Starting from the fixpoint characterization of answer sets of FASP programs, we show that for any given model of the completion that is not an answer set, we can construct a loop that is violated.

\begin{proposition}\label{prop:notmodel-lfviolated}
 Let $P$ be a FASP program. If an interpretation $I$ of $P$ is a model of $\comp{P}$ and $I \neq \lfpnfimcons{P^I}$, then some $L \subseteq \supp{I \fsetminus \lfpnfimcons{P^I}}$ must exist such that $I \not\models \loopform{P}{L}$.
\end{proposition}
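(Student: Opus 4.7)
My plan is to combine the fixpoint characterisation of the completion with the structural Lemma~\ref{lem:loopforms-sound-1} to extract the violating loop. First I would observe that, because $I \models \comp{P}$, the literal part of the completion forces $I(a) = \sup\{I(\body{r}) \mid r \in P_a\}$ for every $a \in \hbase{P}$, i.e.~$I = \nfimcons{P}(I)$, and by Lemma~\ref{lem:notmodel-lfviolated} equivalently $I = \nfimcons{P^I}(I)$. Thus $I$ is a fixpoint of $\nfimcons{P^I}$; writing $M = \lfpnfimcons{P^I}$ we have $M \subseteq I$ and, by hypothesis, $M \neq I$, so the candidate set $X = \supp{I \fsetminus M}$ is non-empty and sits inside $\supp{I \fsetminus \lfpnfimcons{P^I}}$ as required.

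Next I would show that every $u \in X$ admits an outgoing edge in $\depgraph{P}$ to some atom of $X$. Since both $I$ and $M$ are fixpoints of $\nfimcons{P^I}$, $I(u) > M(u)$ forces some rule $r \in P_u$ to satisfy $I(\body{r^I}) > M(\body{r^I})$. The expressions $\body{r^I}$ evaluated under $I$ and under $M$ differ only in the values assigned to the atoms of $\posbody{r}$ (the negative occurrences having been replaced by $I$-constants in $r^I$), so monotonicity of $\bodyfand{r}$ produces an atom $b \in \posbody{r}$ with $I(b) > M(b)$, i.e.~$b \in X$ and $(u,b) \in \depgraph{P}$. Because $\hbase{P}$ is finite, Lemma~\ref{lem:loopforms-sound-1} then yields a loop $L \subseteq X$ such that no $l \in L$ has an edge to $X \setminus L$.

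To finish, I would verify that $I \not\models \loopform{L}{P}$. Pick any $r \in \nonlooprules{P}{L}$; then $\head{r} \in L$ and $\posbody{r} \cap L = \emptyset$, and the choice of $L$ additionally prevents $\posbody{r}$ from meeting $X \setminus L$, so $\posbody{r} \cap X = \emptyset$. Consequently $I(b) = M(b)$ for every $b \in \posbody{r}$, which collapses the relevant quantities to $I(\body{r}) = I(\body{r^I}) = M(\body{r^I})$. Since $r^I \in P^I_{\head{r}}$ and $M$ is a fixpoint of $\nfimcons{P^I}$, we get $M(\body{r^I}) \leq M(\head{r}) < I(\head{r}) \leq \sup_{l \in L} I(l)$, the strict step coming from $\head{r} \in L \subseteq X$. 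Taking the supremum over $\nonlooprules{P}{L}$ (and noting that if $\nonlooprules{P}{L} = \emptyset$ the right-hand side of the loop formula is $0$ while every $l \in L$ has $I(l) > 0$) yields $\sup_{r \in \nonlooprules{P}{L}} I(\body{r}) < \sup_{l \in L} I(l)$, and the residuation principle~(\ref{eq:resfimpprop}) gives $I(\loopform{L}{P}) < 1$.

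The main obstacle I anticipate is the bookkeeping among $I(\body{r})$, $I(\body{r^I})$ and $M(\body{r^I})$: the estimate only goes through when the three coincide, and this collapse depends crucially on $\posbody{r}$ being disjoint from $X$, not merely from $L$. It is precisely the ``no outgoing edge from $L$ to $X \setminus L$'' clause of Lemma~\ref{lem:loopforms-sound-1} that delivers this disjointness; without selecting a maximal such $L$, one could at best extract an unfounded-set obstruction rather than a graph-theoretic loop of the kind demanded by $\loopformprog{P}$.
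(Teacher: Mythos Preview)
Your proposal is correct and follows essentially the same route as the paper: establish that $I$ is a fixpoint of $\nfimcons{P^I}$, set $X=\supp{I\fsetminus\lfpnfimcons{P^I}}$, show every vertex of $X$ has an outgoing $\depgraph{P}$-edge into $X$, apply Lemma~\ref{lem:loopforms-sound-1} to extract a leaf loop $L\subseteq X$, and conclude that $I(\body{r})<I(\head{r})$ for every $r\in\nonlooprules{P}{L}$. The only cosmetic difference is that the paper imports the key implication $(\posbody{r}\cap X=\emptyset)\Rightarrow I(\body{r})<I(u)$ from the proof of Proposition~\ref{prop:fixp-charact}, whereas you re-derive it directly via the fixpoint equation for $M$ and monotonicity of $\bodyfand{r}$; the underlying argument is the same.
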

\begin{proof}
 Suppose $I$ is an interpretation of $P$ and $I \models \comp{P}$, then from the definition of $\comp{P}$ and Lemma~\ref{lem:notmodel-lfviolated}, we can easily see that $I$ is a fixpoint of $\nfimcons{P^I}$. Since $I \neq \lfpnfimcons{P^I}$, some $I' \subset I$ must exist such that $I' = \lfpnfimcons{P^I}$.
 
 Consider then the set $U = \{ u \in \hbase{P} \mid I(u) > I'(u) \}$. It holds that $U = \supp{I \fsetminus I'}$ since $I' \subset I$ and thus $U = \supp{I \fsetminus \lfpnfimcons{P^I}}$ by definition of $I'$. From the proof of Proposition~\ref{prop:fixp-charact} we then also know that for this set $U$ the following property holds
  \begin{equation}\Forall{u \in U}{\Forall{r \in P_u}{\Big(\posbody{r} \cap U = \emptyset\Big) \imp \Big(I(\body{r}) < I(u)\Big)}}\label{eq:notmodel-lfviolated-1}\end{equation}
 We can then show that there is a loop in $U$ whose loop formula is violated. Since $I = \nfimcons{P^{I}}(I)$ we know from Lemma~\ref{lem:notmodel-lfviolated} that $I = \nfimcons{P}(I)$. From the definition of $\nfimcons{P}$ this means
  $$\Forall{l \in \hbase{P}}{I(l) = \sup \{I(\body{r}) \mid r \in P_l \}}$$
 Since the supremum is attained because $P$ is finite we obtain
  $$\Forall{l \in \hbase{P}}{\Exists{r \in P_l}{I(l) = I(\body{r})}}$$
 As $U \subseteq \hbase{P}$ this means
  $$ \Forall{u \in U}{\Exists{r \in P_u}{I(l) = I(\body{r})}}$$
 Using (\ref{eq:notmodel-lfviolated-1}) it then holds that
   $$\Forall{u \in U}{\Exists{r \in P_u}{\posbody{r} \cap U \neq \emptyset}}$$
  From the definition of $G_{P}$ we thus get
   $$\Forall{u \in U}{\Exists{u' \in U}{(u,u') \in G_{P}}}$$
  Using Lemma~\ref{lem:loopforms-sound-1} it follows that there is a set $L \subseteq U$ that is a loop in $P$ such that for each $l \in L$ there is no $l' \in U\setminus L$ such that $(l,l') \in E$. In other words, for each $l \in L$ there is no $l' \in U\setminus L$ such that there is a rule $r \in P_l$ for which $l' \in \posbody{r}$. Hence for each $l \in L$ and rule $r \in P_l$ such that $U \cap \posbody{r} \neq \emptyset$, it follows that $L \cap \posbody{r} \neq \emptyset$.
  From (\ref{eq:notmodel-lfviolated-1}) and using contraposition this means there is some $L \subseteq U$ that is a loop in $P$ and for each $l \in L$ and $r \in P_{l}$ if $L \cap \posbody{r} = \emptyset$ it must hold that $I(\body{r}) < I(l)$.
  Now, for each $l \in L$ and $r \in \nonlooprules{P}{L} \cap P_l$ by definition it holds that $L \cap \supp{I} = \emptyset$, meaning $I(\body{r}) < I(l)$. Thus, $\sup \{ I(\body{r}) \mid r \in \nonlooprules{P}{L} \} < \sup \{ I(l) \mid l \in L \}$, meaning $I \not\models \loopform{L}{P}$.
%%  \begin{eqproof}
% \[
%  \begin{array}{rcl}
%   I = \nfimcons{P^I}
%    &\equiv& \hint{Lemma~\ref{lem:notmodel-lfviolated}} I = \nfimcons{P}\\
%    &\equiv& \hint{Def.~$\nfimcons{P}$} \Forall{l \in \hbase{P}}{I(l) = \sup_{r \in P_l} \minsuppint{I}{r}}\\
%    &\imp& \hint{$\sup$ is attained} \Forall{l \in \hbase{P}}{\Exists{r \in P_l}{I(l) = \minsuppint{I}{r}}}\\
%    &\imp& \hint{$U \subseteq \hbase{P}$} \Forall{u \in U}{\Exists{r \in P_u}{I(l) = \minsuppint{I}{r}}}\\
%    &\imp& \hint{Eq.~\ref{eq:notmodel-lfviolated-1}} \Forall{u \in U}{\Exists{r \in P_u}{\posbody{r} \cap U \neq \emptyset}}\\
%    &\imp& \hint{Lemma~\ref{lem:loopforms-sound-2}} \Exists{L \subseteq U}{L \in \loopformprog{P} \wedge}\\
%    & & \Forall{l \in L}{\Forall{r \in P_l}{U \cap \posbody{r} \neq \emptyset \imp L \cap \posbody{r} \neq \emptyset}}\\
%    & \imp & \hint{Eq.~\ref{eq:notmodel-lfviolated-1}, Contrapos.} \Exists{L \subseteq U}{\textnormal{$L$ is a loop in $U$}} \wedge\\
%    & & \Forall{l \in L}{\Forall{r \in P_l}{L \cap \posbody{r} = \emptyset \imp \minsuppint{I}{r} < I(l)}}\\
%    &\imp& \hint{Lemma~\ref{lem:loopforms-sound-3}, Def.~$\nonlooprules{P}{L}$} I \not\models \loopform{P}{L}
%   \end{array}
%  \]
%%  \end{eqproof}
%  Hence, we have established a contradiction.
\end{proof}

Now, we can extend the ASSAT-procedure from \cite{assat-linzhao} to fuzzy answer set programs $P$. The main idea of this method is to use fuzzy SAT solving techniques to find models of the fuzzy propositional theory which consists of the completion of $P$, together with the loop formulas of particular maximal loops of $P$. If a model is found which is not an answer set, then we determine a loop that is violated by the model and add its loop formula to the fuzzy propositional theory, after which the fuzzy SAT solver is invoked again. The algorithm thus becomes:

\begin{enumerate}
 \item Initialize $Loops = \emptyset$
 \item Generate a model $M$ of $\comp{I} \cup \loopform{P}{Loops}$, where $\loopform{P}{Loops}$ is the set of loop formulas of all loops in $Loops$.
 \item If $M = \lfpnfimcons{P^M}$, return $M$ as it is an answer set. Else, find the loops occurring in $\supp{I \fsetminus \lfpnfimcons{P^M}}$, add their loop formulas to $Loops$ and return to step 2.
\end{enumerate}

The reason that we can expect this process to be efficient is articulated by Proposition~\ref{prop:notmodel-lfviolated}. Indeed, when searching for violated loops, we can restrict our attention to subsets of $\supp{I \fsetminus \lfpnfimcons{P^I}}$. Although the worst-case complexity of this algorithm is still exponential, in most practical applications, we can expect $\supp{I \fsetminus \lfpnfimcons{P^I}}$ to be small, as well as the number of iterations of the process that is needed before an answer set is found. In \cite{assat-linzhao} experimental evidence for this claim is provided in the case of classical ASP. Last, note that the fuzzy SAT solving technique depends on the t-norms used in the program. If only the \L ukasiewicz t-norm is used, we can use (bounded) mixed integer programming (bMIP) \cite{ManyValuedMixedInteger}. Since Fuzzy Description Logic Solvers are based on the same techniques as fuzzy SAT solvers, we also know that for the
product t-norm we need to resort to bounded mixed integer quadratically constrained programming (bMICQP) \cite{BobilloStraccia}.

% Since we only need to search for the loops of a subset of all literals due to Proposition~\ref{prop:notmodel-lfviolated}, which only needs to be done when a model is generated that is not an answer set, this procedure does not need to add an exponential number of loops at the start. Based on the experimental results in \cite{assat-linzhao}, we expect a similar improvement when generating fuzzy answer sets using fuzzy SAT solvers.

% The Fuzzy Reasoning Toolkit (FuReT) project (\url{http://furet.sourceforge.net}) is an open source project that combines a number of efforts related to fuzzy answer set programming and fuzzy satisfiability checking. Currently the project contains a fuzzy SAT solver based on genetic algorithms, a fuzzy SAT solver for \L ukasiewicz logic based on constraint satisfaction \cite{steven:sum09}, and an implementation of the results presented in this paper.

\section{Example: the ATM location selection problem}\label{sec:example}

In this section we illustrate our algorithm on a FASP program modeling a real-life problem. Suppose we are tasked with placing $k$ ATM machines $\mathit{ATM} = \{ a_1,\ldots,a_k \}$ on roads connecting $n$ towns $Towns = \{ t_1,\ldots,t_n \}$ such that the distance between each town and some ATM machine is minimized, i.e.~we aim to find a configuration in which each town has an ATM that is as close as practically possible. To obtain this we optimize the sum of closeness degrees for each town and ATM. Note that this problem closely resembles the well-known $k$-center selection problem (see e.g.~\cite{AusielloAl:ComplexityAndApproximation}). The difference is that in the $k$-center problem the ATMs need to be placed in towns, where we allow them to be placed on the roads connecting towns. We can model this problem as an undirected weighted graph $G = \tuple{V,E}$ where $V = Towns$ is the set of vertices and the edge set $E$ connects two towns if they are directly connected by a road. Given a distance function $d : Towns \times Towns \to \mathbb{R}$ that models the distance between two towns\footnote{For cities that are not connected the function $d$ models the distance of the shortest path between them.}, the weight of the edge $(a,b) \in E$ is given by the normalized distance $d(a,b)/d_{sum}$, where $d_{sum} = \sum\{ d(t_1,t_2) \mid t_1,t_2 \in Towns\}$.

Since our FASP programs can only have t-norms in rule bodies, we also need to find a way to sum up the distances between towns and ATM machines. By using the \emph{nearness degree}, or closeness degree, which for a normalized distance $d$ is defined as $1-d$, we can perform summations of distances in our program. To see this, consider the following derivation:
 \begin{align*}
  \pretnorm_l(1-dist_1,1-dist_2)
    & = \max(1-dist_1 + 1-dist_2-1,0)\\
    & = \max(1-(dist_1 + dist_2),0)\\
    & = 1-\min(dist_1 + dist_2,1) 
 \end{align*}
Hence, by applying the \L ukasiewicz t-norm on the nearness degrees, we are summing the distances.

The program $P_{\mathit{ATM}}$ solving the ATM selection problem is given as follows:

\begin{lprogram}
 \lprule{gloc:}{loc(A,T1,T2)}{\pretnorm_l(conn(T1,T2),\beta)\\
 \lprule{gnear:}{locNear(A,T1)}{\lneg{locNear'(A,T1)}}\\
 \lprule{gnear':}{locNear'(A,T1)}{\pretnorm_l(loc(A,T1,T2),\lneg{near(T1,T2)},}\\
   & & & \phantom{\pretnorm_l(}locNear(A,T2)),T1 \neq T2\\
 \lprule{nearr:}{near(T1,T2)}{\pretnorm_l(conn(T1,T3),near(T1,T3),near(T3,T2))}\\
 \lprule{locr:}{loc(A,T1,T2)}{loc(A,T2,T1)}\\
 \lprule{atmr:}{\mathit{ATMNear}(A,T)}{\pretnorm_l(loc(A,T1,T2),locNear(A,T1),near(T,T1))}\\
 \lprule{tDist:}{\mathit{totNear}}{\pretnorm_l(\{ \mathit{ATMNear}(a,t) \mid a \in \mathit{ATM},t \in \mathit{Towns}\})}\\
%  \lprule{qconst:}{0}{ATMDist(A,T)}\\
%  \lprule{sconst:}{0}{loc(A,T1,T2),loc(A',T1,T2),A \neq A'}
\end{lprogram}
where 
%  $$\beta = \pretnorm_l(\{\mneg{loc(A,T1',T2')} \mid \neg((T1 = T1' \wedge T2 = T2') \vee (T1 = T2' \wedge T2 = T1'))\})}$$
 $$\beta = \pretnorm_l(\{\mneg{loc(A,T1',T2')} \mid \{ T'_1,T'_2 \} \neq \{ T_1,T_2 \}\})}$$
Note that, due to grounding, a rule such as $locr$ actually corresponds to a set of variable-free rules $\{ locr_{a,t_1,t_2} \mid a \in ATM, t_1,t_2 \in Towns \}$. We will keep referring to the specific grounded instance of a rule by the subscript.

Program $P_{\mathit{ATM}}$ consists of a \emph{generate} and \emph{define} part, which for a specific configuration is augmented with an \emph{input} part consisting of facts. The \emph{generate} part consists of the three rules $gloc$, $gnear$, and $gnear'$, which generate a specific configuration of ATMs. The $gloc$ rule chooses an edge on which the ATM machine $A$ is placed by guessing a location for an ATM that does not yet has an assigned location, as ensured by the $\beta$ part of this rule. The $gnear$ and $gnear'$ rules generate a location on this edge where $A$ is placed. Rules $gnear$ and $gnear'$ originate from the constraint $d(a,t_1) = d(t_1,t_2) - d(a,t_2)$, where $d(x,y)$ is the distance between $x$ and $y$, if ATM $a$ is placed on the edge between $t_1$ and $t_2$. Defining $n(x,y)$ as the nearness degree between $x$ and $y$ and noting that $n(a,t_1) = 1-d(a,t_1) = 1-(d(t_1,t_2) - d(a,t_2))$, we can rewrite this constraint in terms of t-norms and nearness degrees:
 \begin{align*}
  n(a,t_1)
    &= 1-(d(t_1,t_2) - d(a,t_2))\\
    &= 1-(d(t_1,t_2) + (1-d(a,t_2)) - 1)\\
    &= 1-\pretnorm_l(d(t_1,t_2),1-d(a,t_2))\\
    &= 1-\pretnorm_l(1-n(t_1,t_2),n(a,t_2))\\
    &= \fneg{l}{\pretnorm_l(1-n(t_1,t_2),n(a,t_2))}
 \end{align*}
Hence, the bodies of rules $gnear$ and $gnear'$ ensure that this constraint is satisfied. The reason we need two rules and cannot directly write a rule with body $\fneg{s}{\pretnorm_l(loc(A,T1,T2),\lneg{near(T1,T2)},locNear(A,T2)}$ is that the syntax does not allow negation in front of arbitrary expressions.
%  If an ATM $a$ is placed on the edge between towns $t_1$ and $t_2$, the body of a grounded version of this rule becomes
%  \begin{align*}
%   & \pretnorm_l(loc(a,t_1,t_2),\fneg_s\,locNear(a,t_2),near(t_1,t_2))\\
%   &= \pretnorm_l(1-locNear(a,t_2),near(t_1,t_2))\\
%   &= \max(1-locNear(a,t_2)+near(t_1,t_2)-1,0)\\
%   &= \max(near(t_1,t_2)-locNear(a,t_2),0)
%  \end{align*}
% Hence, $nearLoc(a,t_1) = near(t_1,t_2) - nearLoc(a,t_2)$, as expected. The other rules of the program form the \emph{define} part, that can be used to define optimization criteria.

Rule $nearr$ recursively defines the degree of closeness between two towns based on the known distances for connected towns.  Additionally, since the bodies of rules with the same head are combined using the maximum, the nearness degree obtained by $nearr$ is always one minus the distance of the shortest path. The $locr$ rule makes sure that if an ATM is located on the edge between town $T1$ and $T2$, it is also recognized as being on the edge between $T2$ and $T1$, as we are working with an undirected graph. The $atmr$ rule defines the location between a particular ATM machine and a town. Note that due to rule $locr$ this rule also covers the case when $near(T,T2)$ is higher than $near(T,T1)$. The $tDist$ rule aggregates the total distances such that different answer sets of this program can be compared and ordered. In this way we could for example search for the answer set that has a maximal total degree of nearness, i.e.~in which the distance from the towns to the ATMs is lowest.
%  The $gloc$ and $gdist$ rules are the generate part of the program, i.e.~they choose a certain configuration that is constrained by the sanity constraint $sconstr$. The $gloc$ rule chooses an edge on which the ATM machine $A$ is going to be placed and $gdist$ chooses the location on this edge where the ATM machine is placed. The $sconstr$ then makes sure that a single edge only contains a single ATM machine.

Consider the specific configuration $G_P = \langle V,E \rangle$ of towns $Towns = \{ t_1, t_2, t_3 \}$ depicted in Figure~\ref{fig:atm} and suppose $\mathit{ATM} = \{a_1,a_2\}$. In Figure~\ref{fig:depgraph-atm} we depicted a subset of the dependency graph of the grounded version of $P'_{\mathit{ATM}} = P_{\mathit{ATM}} \cup F$, where $F$ is the input part of the problem, given by the following rules
\begin{align*}
 F = & \{ conn(t,t') \gets 1 \mid t,t' \in Towns, (t,t') \in E \} \\
   \cup & \{ near(t,t') \gets k \mid t,t' \in Towns, (t,t') \in E, k = 1-(d(t,t')/d_{sum}) \}
\end{align*}
For the configuration depicted in Figure~\ref{fig:atm} the input part $F$ is
\begin{align*}
 F = &\{ conn(t_1,t_1) \gets 1, conn(t_1,t_2) \gets 1, conn(t_1,t_3) \gets 1\}\\
& \cup \{ conn(t_2,t_1) \gets 1, conn(t_2,t_2) \gets 1, conn(t_2,t_3) \gets 1 \}\\
& \cup \{ conn(t_3,t_1) \gets 1, conn(t_3,t_2) \gets 1, conn(t_3,t_3) \gets 1 \}\\
& \cup \{ near(t_1,t_1) \gets 1, near(t_1,t_2) \gets 0.8, near(t_1,t_3) \gets 0.7 \}\\
& \cup \{ near(t_2,t_1) \gets 0.8, near(t_2,t_2) \gets 1, near(t_2,t_3) \gets 0.5 \}\\
& \cup \{ near(t_3,t_1) \gets 0.7, near(t_3,t_2) \gets 0.5, near(t_3,t_3) \gets 1 \}
\end{align*}
It is clear that $P'_{\mathit{ATM}}$ contains a number of loops. The completion of $P'_{\mathit{ATM}}$ is the following fuzzy propositional theory:
%  \[\begin{array}{c}
{\allowdisplaybreaks \begin{align*}
% {\allowdisplaybreaks \[\begin{array}{l}
   &conn(t_1,t_1) \feq 1,\;\;\;conn(t_1,t_2) \feq 1,\;\;\;conn(t_1,t_3) \feq 1\\
%    conn(t_1,t_2) &\feq 1\\
%    conn(t_1,t_3) &\feq 1\\
   &conn(t_2,t_1) \feq 1,\;\;\;conn(t_2,t_2) \feq 1,\;\;\;conn(t_2,t_3) \feq 1\\
%    conn(t_2,t_2) &\feq 1\\
%    conn(t_2,t_3) &\feq 1\\
   &conn(t_3,t_1) \feq 1,\;\;\;conn(t_3,t_2) \feq 1,\;\;\;conn(t_3,t_3) \feq 1\\
%    conn(t_3,t_2) &\feq 1\\
%    conn(t_3,t_3) &\feq 1\\
   &near(t_1,t_1) \feq 1,\;\;\;near(t_1,t_2) \feq 0.8,\;\;\;near(t_1,t_3) \feq 0.7\\
   &near(t_2,t_1) \feq 0.8,\;\;\;near(t_2,t_2) \feq 1,\;\;\;near(t_2,t_3) \feq 0.5\\
   &near(t_3,t_1) \feq 0.7,\;\;\;near(t_3,t_2) \feq 0.5,\;\;\;near(t_3,t_3) \feq 1\\
   &loc(a_1,t_1,t_1) \feq_l \max(\pretnorm_l(conn(t_1,t_1),\beta_{1,1,1}),loc(a_1,t_1,t_1))\\
   &loc(a_1,t_1,t_2) \feq_l \max(\pretnorm_l(conn(t_1,t_2),\beta_{1,1,2}),loc(a_1,t_2,t_1))\\
   &loc(a_1,t_1,t_3) \feq_l \max(\pretnorm_l(conn(t_1,t_3,\beta_{1,1,3}),loc(a_1,t_3,t_1))\\
   &\phantom{loc(a_1,t_1,t_3) }\ldots\\
   &loc(a_2,t_3,t_1) \feq_l \max(\pretnorm_l(conn(t_3,t_1),\beta_{2,3,1}),loc(a_2,t_1,t_3))\\
   &loc(a_2,t_3,t_2) \feq_l \max(\pretnorm_l(conn(t_3,t_2),\beta_{2,3,2}),loc(a_2,t_2,t_3))\\
   &loc(a_2,t_3,t_3) \feq_l \max(\pretnorm_l(conn(t_3,t_3),\beta_{2,3,3}),loc(a_2,t_3,t_3))\\
   &locNear(a_1,t_1) \feq_l\; \lneg{locNear'(a_1,t_1)}\\
   &\phantom{locNear(a_1,t_1) }\ldots\\
   &locNear(a_2,t_3) \feq_l \;\lneg{locNear'(a_2,t_3)}\\
   &locNear'(a_1,t_1) \feq_l \max(\pretnorm_l(loc(a_1,t_1,t_2),locNear(a_1,t_2),\lneg{near(t_1,t_2)}),\\
   &\phantom{locNear'(a_1,t_1) \feq_l \max(}\pretnorm_l(loc(a_1,t_1,t_3),locNear(a_1,t_3),\lneg{near(t_1,t_3)}))\\
%    &\phantom{locNear'(a_1,t_1) \feq_l \max(}\pretnorm_l(loc(a_1,t_1,t_3),locNear(a_1,t_3),\lneg{near(t_1,t_3)})\\
   &\phantom{locNear'(a_1,t_1) }\ldots\\
   & locNear'(a_2,t_3) \feq_l \max(\pretnorm_l(loc(a_2,t_3,t_1),locNear(a_2,t_1),\lneg{near(t_3,t_1)}),\\
   &\phantom{locNear'(a_2,t_3) \feq_l \max(}\pretnorm_l(loc(a_2,t_3,t_2),locNear(a_2,t_2),\lneg{near(t_3,t_2)}))\\
%    &\phantom{locNear'(a_2,t_3) \feq_l \max(}\pretnorm_l(loc(a_2,t_3,t_3),locNear(a_2,t_3),\lneg{near(t_3,t_3)}))\\
   &near(t_1,t_1) \feq_l \max(\pretnorm_l(conn(t_1,t_1),near(t_1,t_1),near(t_1,t_1)),\\
   &\phantom{near(t_1,t_1) \feq_l \max(}\pretnorm_l(conn(t_1,t_2),near(t_1,t_2),near(t_2,t_1)),\\
   &\phantom{near(t_1,t_1) \feq_l \max(}\pretnorm_l(conn(t_1,t_3),near(t_1,t_3),near(t_3,t_1)),1)\\
%    &\phantom{xxxxxxx}1)\\
   &near(t_1,t_2) \feq_l \max(\pretnorm_l(conn(t_1,t_1),near(t_1,t_1),near(t_1,t_2)),\\
   &\phantom{near(t_1,t_2) \feq_l \max(}\pretnorm_l(conn(t_1,t_2),near(t_1,t_2),near(t_2,t_2)),\\
   &\phantom{near(t_1,t_2) \feq_l \max(}\pretnorm_l(conn(t_1,t_3),near(t_1,t_3),near(t_3,t_2)),0.8)\\
   &\phantom{near(t_1,t_2) }\ldots\\
   & near(t_3,t_3) \feq_l \max(\pretnorm_l(conn(t_3,t_3),near(t_3,t_3),near(t_3,t_3)),\\
   &\phantom{near(t_3,t_3) \feq_l \max(}\pretnorm_l(conn(t_3,t_2),near(t_3,t_2),near(t_2,t_3)),\\
   &\phantom{near(t_3,t_3) \feq_l \max(}\pretnorm_l(conn(t_3,t_1),near(t_3,t_1),near(t_1,t_3)),1)\\
%    &loc(a_1,t_1,t_1) \feq loc(a_1,t_1,t_1)\\
%    &loc(a_1,t_1,t_2) \feq loc(a_1,t_2,t_1)\\
%    &\phantom{loc(a_1,t_1,t_2)}\ldots\\
%    &loc(a_2,t_3,t_2) \feq loc(a_2,t_2,t_3)\\
%    &loc(a_2,t_3,t_3) \feq loc(a_2,t_3,t_3)\\
   &ATMNear(a_1,t_1) \feq \max(\pretnorm_l(loc(a_1,t_1,t_1),locNear(a_1,t_1),near(t_1,t_1)),\\
    &\phantom{ATMNear(a_1,t_1) \feq \max(}\pretnorm_l(loc(a_1,t_1,t_2),locNear(a_1,t_1),near(t_1,t_1)),\\
    &\phantom{ATMNear(a_1,t_1) \feq \max(}\ldots\\
    &\phantom{ATMNear(a_1,t_1) \feq \max(}\pretnorm_l(loc(a_1,t_3,t_2),locNear(a_1,t_3),near(t_1,t_3))\\
    &\phantom{ATMNear(a_1,t_1) \feq \max(}\pretnorm_l(loc(a_1,t_3,t_3),locNear(a_1,t_3),near(t_1,t_3)))\\
    &\phantom{ATMNear(a_1,t_1) }\ldots\\
   &ATMNear(a_2,t_3) \feq \max(\pretnorm_l(loc(a_2,t_1,t_1),locNear(a_2,t_1),near(t_3,t_1)),\\
    &\phantom{ATMNear(a_2,t_3) \feq \max(}\pretnorm_l(loc(a_2,t_1,t_2),locNear(a_2,t_1),near(t_3,t_1)),\\
    &\phantom{ATMNear(a_2,t_3) \feq \max(}\ldots\\
    &\phantom{ATMNear(a_2,t_3) \feq \max(}\pretnorm_l(loc(a_2,t_3,t_2),locNear(a_2,t_3),near(t_3,t_3))\\
    &\phantom{ATMNear(a_2,t_3) \feq \max(}\pretnorm_l(loc(a_2,t_3,t_3),locNear(a_2,t_3),near(t_3,t_3)))\\
   &totNear \feq \pretnorm_l\{ ATMNear(a,t) \mid a \in \mathit{ATM}, t \in Towns \}
%     \pretnorm_l(\fneg_s\,connected(t_1,t_1),connected(t_1,t_2),near(t_1,t_2),near(t_2,t_1)),\pretnorm_l(\fneg_s\,connected(t_1,t_1),connected(t_1,t_3),near(t_1,t_3),near(t_3,t_1))) 
%   &\{ loc(a,t_1,t_2) \feq \max\{ \pretnorm_l(connected(t_1,t_2),\beta), loc(a,t_1,t_2) \} \mid a \in \mathit{ATM}, t_1, t_2 \in \mathit{Towns}\}\\
%   &\cup \{ locNear() \}
%  \end{array}\]
\end{align*}}
% \end{array}\]}
where 
%  $$\beta_{i,j,k} = \pretnorm_l(\{\mneg{loc(a_i,t'_j,t'_k)} \mid \neg((t_j = t'_j \wedge t_k = t'_2) \vee (t_j = t'_2 \wedge t_k = t'_1))\})$$ 
$$\beta_{i,j,k} = \pretnorm_l(\{\mneg{loc(a_i,t'_j,t'_k)} \mid \{t'_j,t'_k \} \neq \{ t_j,t_k \} \} )$$
Note that e.g.~the $1$ in the right-hand side of the fuzzy proposition with $near(t_1,t_1)$ on the right-hand side stems from the inputs $F$ we added to $P_{\mathit{ATM}}$.
From the completion $\comp{P'_{\mathit{ATM}}}$ we can see that an interpretation $M$ satisfying $M(near(t1,t2)) = 1$ can be a model of $\comp{P'_{\mathit{ATM}}}$, which is clearly unwanted as this would overestimate the nearness degrees between towns (i.e.~underestimate the distances). For example, consider 

% $M = \{loc(a_1,t_1,t_2)^{1},$ $loc(a_1,t_2,t_1),$ $loc(a_2,t_1,t_3)^{1},$ $loc(a_2,t_3,t_1),$ $locNear(a_1,t_1)^{1}$, $locNear(a_1,t_2)^{1}$, $locNear(a_2,t_1)^{0.75},$ $locNear(a_2,t_3)^{0.75},$ $near(t_1,t_1)^{1},$ $near(t_1,t_2)^{1},$ $near(t_2,t_1)^{1},$ $near(t_1,t_3)^{0.5},$ $near(t_3,t_1)^{0.5},$ $near(t_2,t_3)^{0.7},$ $near(t_3,t_2)^{0.7},$ $\mathit{ATMNear(a_1,t_1)}^{1}$, $\mathit{ATMNear(a_1,t_2)}^{1}$, $\mathit{ATMNear}(a_1,t_3)^{0.7},$ $\mathit{ATMNear}(a_2,t_1)^{0.75}$, $\mathit{ATMNear}(a_2,t_2)^{0.75}$, $\mathit{ATMNear}(a_2,t_3)^{0.75}$ $\}$

\begin{align*}
M = & \{loc(a_1,t_1,t_2)^{1},loc(a_1,t_2,t_1)^1,loc(a_2,t_1,t_3)^{1}, loc(a_2,t_3,t_1)^1,\\
& locNear(a_1,t_1)^{1}, locNear(a_1,t_2)^{1}, locNear(a_2,t_1)^{0.75},locNear(a_2,t_3)^{0.75},\\
& locNear'(a_2,t_1)^{0.25},locNear'(a_2,t_3)^{0.25},near(t_1,t_1)^{1},near(t_1,t_2)^{1},\\
& near(t_2,t_1)^{1}, near(t_1,t_3)^{0.7}, near(t_3,t_1)^{0.7}, near(t_2,t_3)^{0.5}, near(t_3,t_2)^{0.5},\\
& near(t_2,t_2)^{1},near(t_3,t_3)^{1},\mathit{ATMNear(a_1,t_1)}^{1}, \mathit{ATMNear(a_1,t_2)}^{1},\\
& \mathit{ATMNear}(a_1,t_3)^{0.7},\mathit{ATMNear}(a_2,t_1)^{0.75}, \mathit{ATMNear}(a_2,t_2)^{0.75},\\
& \mathit{ATMNear}(a_2,t_3)^{0.75} \}
\end{align*}

Note that atoms $a$ for which $M(a) = 0$ are not included in the set notation, which is e.g.~the case for $totNear$. One can easily verify that $M$ is a model of $\comp{P'_{\mathit{ATM}}}$.
To check whether $M$ is an answer set we compute $\lfpnfimcons{(P'_{\mathit{ATM}})^M}$ by repeatedly applying $\nfimcons{(P'_{\mathit{ATM}})^M}$, starting from the empty set, until we obtain a fixpoint, and check whether $M = \lfpnfimcons{(P'_{\mathit{ATM}})^M}$. Performing this procedure, we obtain

\begin{align*}
\lfpnfimcons{(P'_{\mathit{ATM}})^M} = & \{loc(a_1,t_1,t_2)^{1},loc(a_1,t_2,t_1)^1,loc(a_2,t_1,t_3)^{1}, loc(a_2,t_3,t_1)^1,\\
& locNear(a_1,t_1)^{1}, locNear(a_1,t_2)^{1}, locNear(a_2,t_1)^{0.75},\\
& locNear(a_2,t_3)^{0.75},locNear'(a_2,t_1)^{0.25},locNear'(a_2,t_3)^{0.25},\\
& near(t_1,t_1)^{1},near(t_1,t_2)^{0.8},near(t_2,t_1)^{0.8}, near(t_1,t_3)^{0.5},\\
& near(t_3,t_1)^{0.5},near(t_2,t_3)^{0.7}, near(t_3,t_2)^{0.7},near(t_2,t_2)^{1},near(t_3,t_3)^{1},\\
& \mathit{ATMNear(a_1,t_1)}^{1},\mathit{ATMNear(a_1,t_2)}^{1},\mathit{ATMNear}(a_1,t_3)^{0.7},\\
& \mathit{ATMNear}(a_2,t_1)^{0.75},\mathit{ATMNear}(a_2,t_2)^{0.75},\mathit{ATMNear}(a_2,t_3)^{0.75} \}
\end{align*}

We can see that $\lfpnfimcons{(P'_{\mathit{ATM}})^M}(near(t_1,t_2)) = 0.8 \neq M(near(t_1,t_2))$, hence $M$ is not an answer set of $P'_{\mathit{ATM}}$.
From Proposition~\ref{prop:notmodel-lfviolated} we then know that there must be a loop in $\supp{M \fsetminus \lfpnfimcons{(P'_{\mathit{ATM}})^M}} = \{near(t_1,t_2), near(t_2,t_1)\}$ whose loop formula is violated. Looking at the dependency graph, we can see that $L = \supp{M \fsetminus \lfpnfimcons{(P'_{\mathit{ATM}})^M}} = \{near(t_1,t_2), near(t_2,t_1)\}$ contains three loops: $L_1 = L$, $L_2 = \{ near(t_1,t_2) \}$ and $L_3 = \{ near(t_2,t_1) \}$. Their loop formulas are
 \begin{align*}
  & \loopform{L_1}{P'_{\mathit{ATM}}} = \prefimp(\max\Big(near(t_1,t_2),near(t_2,t_1)\Big),\max\Big(\pretnorm_l(conn(t_1,t_3),\\
  & \phantom{xxxxxx}near(t_1,t_3),near(t_3,t_2)),0.8,\pretnorm_l(conn(t_2,t_3),near(t_2,t_3),near(t_3,t_1))\Big)\\
  & \loopform{L_2}{P'_{\mathit{ATM}}} = \\
  & \phantom{xxxx}\prefimp(\max\Big(near(t_1,t_2)\Big),\max\Big(\pretnorm_l(conn(t_1,t_3),near(t_1,t_3),near(t_3,t_2)),0.8\Big)\\
  & \loopform{L_3}{P'_{\mathit{ATM}}} = \\
  & \phantom{xxxx}\prefimp(\max\Big(near(t_2,t_1)\Big),\max\Big(\pretnorm_l(conn(t_2,t_3),near(t_2,t_3),near(t_3,t_1)),0.8\Big)
%  & \prefimp(near(t_1,t_2) \lub near(t_2,t_1),0.8 \lub near(t_1,t_1) \lub near(t_2,t_2) \lub near(t_1,t_3) \lub\\
%  & near(t_3,t_1) \lub near(t_2,t_3) \lub near(t_3,t_2))
 \end{align*}
 Clearly, these loop formulas are violated by $M$, hence following the algorithm introduced in
Section~\ref{sec:loopelimination}, we create a new fuzzy propositional theory $\comp{P'_{\mathit{ATM}}} \cup
\{\loopform{L_1}{P'_{\mathit{ATM}}},\loopform{L_2}{P'_{\mathit{ATM}}},\loopform{L_3}{P'_{\mathit{ATM}}}\}$, and try to
find a model of this new theory. Consider then the following model of this new theory:

\begin{align*}
M = & \{loc(a_1,t_1,t_2)^{1},loc(a_1,t_2,t_1)^1,loc(a_2,t_1,t_3)^{1}, loc(a_2,t_3,t_1)^1,\\
& locNear(a_1,t_1)^{0.15}, locNear(a_1,t_2)^{0.05},locNear'(a_1,t_1)^{0.85},\\
& locNear'(a_1,t_2)^{0.95}locNear(a_2,t_1)^{0.75},locNear(a_2,t_3)^{0.75},\\
& locNear'(a_2,t_1)^{0.25},locNear'(a_2,t_3)^{0.25},near(t_1,t_1)^{1},near(t_1,t_2)^{0.8}, \\
& near(t_2,t_1)^{0.8},near(t_1,t_3)^{0.7},near(t_3,t_1)^{0.7},near(t_2,t_3)^{0.5}, near(t_3,t_2)^{0.5},\\
& near(t_2,t_2)^{1},near(t_3,t_3)^{1},\mathit{ATMNear(a_1,t_1)}^{0.85}, \mathit{ATMNear(a_1,t_2)}^{0.95},\\
& \mathit{ATMNear}(a_1,t_3)^{0.55},\mathit{ATMNear}(a_2,t_1)^{0.75}, \mathit{ATMNear}(a_2,t_2)^{0.55},\\
& \mathit{ATMNear}(a_2,t_3)^{0.75} \}
\end{align*}

One can readily verify that this model is an answer set of $P'_{\mathit{ATM}}$, hence the algorithm stops
and returns $M$.

\begin{figure}
 \centering
 \begin{tikzpicture}[scale=2.5,auto,swap]
 %  \tikzstyle{every node}=[draw,shape=circle];
 % \tikzstyle{edge} = [draw,-]
 % \tikzstyle{weight} = [font=\small]
 \path (0:0cm) node[draw,shape=circle] (v1) {$t_1$};
 \path (-18:1.5cm) node[draw,shape=circle] (v2) {$t_2$};
 \path (18:1.3cm) node[draw,shape=circle] (v0) {$t_3$};
 % \path (2*72:1cm) node (v3) {$v_3$};
 % \path (3*72:1cm) node (v4) {$v_4$};
 % \path (4*72:1cm) node (v5) {$v_5$};
 \draw (v0) -- node {$0.7$} (v1) (v0) -- node {$0.5$} (v2) (v1) -- node {$0.8$}(v2);
 % (v0) -- (v3)
 % (v0) -- (v4)
 % (v0) -- (v5);
 \end{tikzpicture}
 \caption{Town configuration for $P_{ATM}$. The weights on the edges denote the nearness degrees between towns $t_1$, $t_2$ and $t_3$}
 \label{fig:atm}
\end{figure}
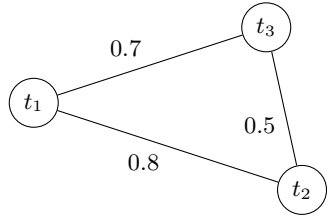

\begin{figure}
 \centering
 \includegraphics[scale=0.25]{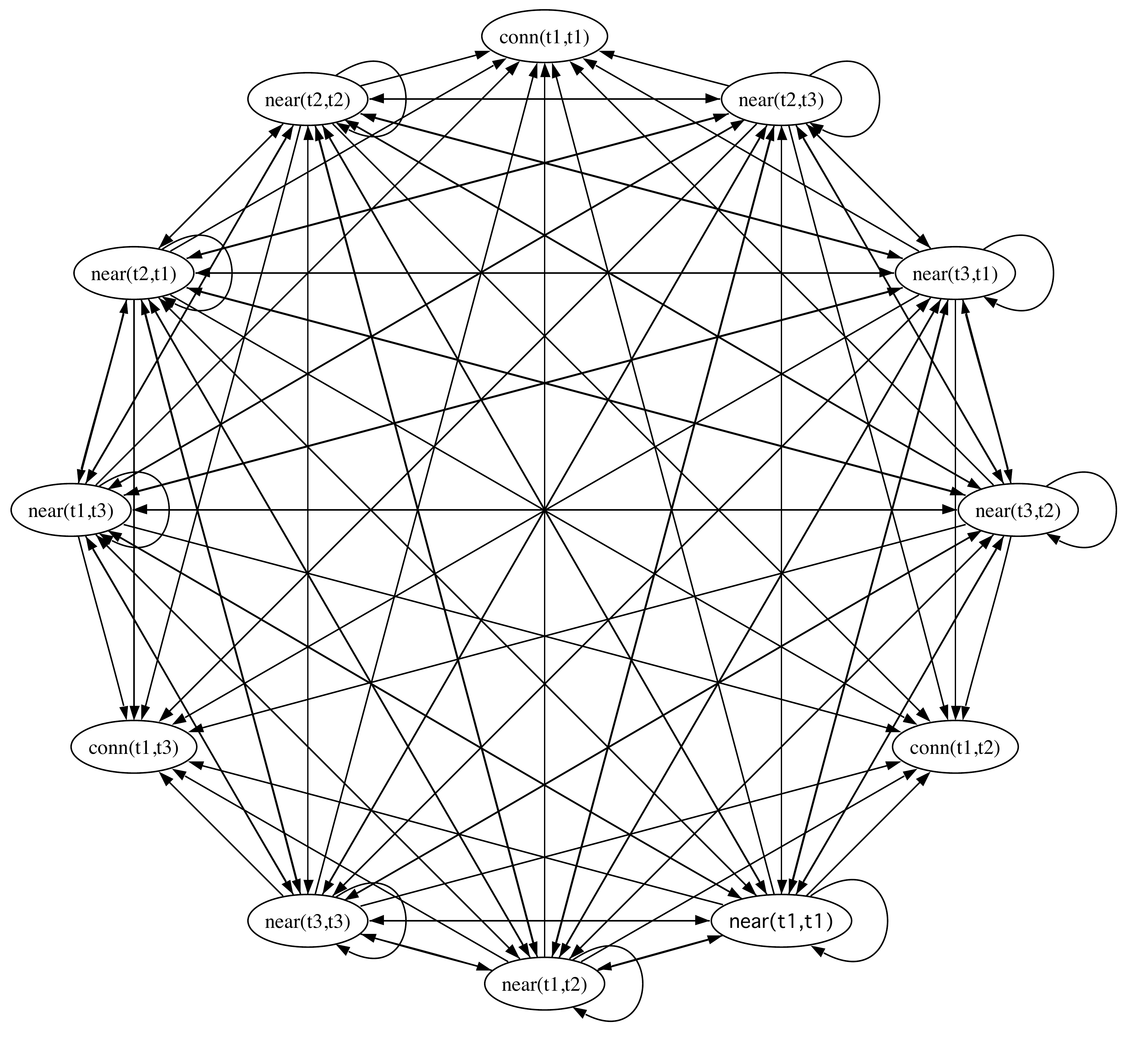}
 \caption{Dependency graph of $P_{ATM}$}
 \label{fig:depgraph-atm}
\end{figure}

We could have solved this problem using Mixed Integer Programming (MIP)\footnote{Though in general the G\"odel negation $\mathcal{N}_m$ cannot be implemented in MIP, in the ATM example we can implement the $\mathit{gloc}$ rules using integer variables.}. However, the exact encoding of this problem would be less clear and straightforward to write. The reason for this is that in the MIP translation the loop formulas would need to be explicitly represented in the program, while in FASP this is handled implicitly. Hence, only the implementer of a FASP system needs to handle these loop formulas, not the developer who writes the FASP programs. This is exactly the power of FASP: providing an elegant, concise, and clear modelling language for representing continuous problems, which, thanks to the results in this paper, can be automatically translated to lower-level languages for solving continuous problems, such as MIP.

\section{Discussion}\label{sec:restrict}

The reader might wonder why we limit our approach to FASP programs with t-norms in their body, because at first sight it
seems the presented approach is easily extendable to arbitrary functions. It turns out that this is not the case,
however. Consider FASP with the \L ukasiewicz t-norm in rule bodies. As mentioned before, the completion of such a program, and its loop formulas, are formulas in \L ukasiewicz logic and are implementable using MIP. Now let us consider FASP where both the \L ukasiewicz t-norm and the \L ukasiewicz t-conorm may occur in rule bodies. At first, one would suspect that the loop formulas of such a program would again be formulas in \L ukasiewicz logic.
This turns out to be wrong however. To see this, consider the following rules:
 \begin{align*}
  b &\gets \lneg{a}\\
  b &\gets \pretconorm_l(b,b)
 \end{align*}
One can readily verify that in the answer sets of a program containing these rules, literal $b$ will be equal to $\mneg{a}$ (provided that $b$ does not occur in the head of any other rule). 
However, the negation $\mathcal{N}_m$ cannot be implemented in MIP, as the solution space of a MIP problem is always a topologically closed set (viz.~the union of a finite number of polyhedra), whereas the solution space of a constraint $b \feq \mneg{a}$ cannot be represented as a closed set due to the strict negation in the definition of $\mathcal{N}_m$. This means that as soon as the \L ukasiewicz t-conorm is allowed, in general, there will not exist a \L ukasiewicz logic theory such that the models of that theory coincide with the answer sets of a given program. Hence, it is clear that the case where other operators than t-norms are used requires a different strategy.

Finding generalized loop formulas that cover e.g. both the \L ukasiewicz t-norm and t-conorm is not a trivial problem.  To illustrate some of the issues, let us examine two intuitive candidates. First, remark that the loop formulas introduced in Section~\ref{sec:loopelimination} eliminate certain answer sets (i.e.~they are too strict). Consider the following program $P$:

\begin{align*}
 a &\gets \pretconorm_l(a,b)\\
 b &\gets k
\end{align*}

\noindent where $k \in [0,1]$. This program has one loop, viz.~$\{a,b\}$ with corresponding loop formula $\max(a,b) \leq k$.
Now note that for $k > 0$ the value of $a$ in any answer set is equal to $1$. Hence, the loop formula incorrectly eliminates all answer sets in this case. One might think this can be solved by including a condition in the loop formula: $(\max(a,b) \leq l) \vee (b > 0)$. This formula however fails to eliminate models that are not answer sets (i.e.~it is not strict enough) on the following program:

\begin{align*}
 a &\gets \pretnorm_m(\pretconorm_l(a,b),0.8)\\
 b &\gets \pretconorm_l(a,b)\\
 b &\gets k
\end{align*}

If $k > 0$ the unique answer set of this program is $\{a^{0.8},b^{1}\}$. However, $\{a^1,b^1\}$ is also a model of the completion
of this program and satisfies the above loop formula.

Although again more refined loop formulas can be thought of that handle the latter program correctly, we are pessimistic about the possibility of finding loop formulas that cover all cases.   It appears that such a general solution should be able to capture some underlying idea of recursion:  one loop may justify the truth value of some atom a, up to a certain level, which may then trigger other rules that justify the truth value of a, up to some higher level, etc.

Note that this problem does not occur in classical ASP (or when using the maximum t-conorm), since e.g. $a \gets b \vee c$ is equivalent to $a \gets b$ and $a \gets c$, which is indeed why disjunctions in the body of rules are not considered in classical ASP. 

\section{Related Work}\label{sec:related}

The approach to fuzzy answer set programming for which we provided the translation to fuzzy SAT is called an \emph{unweighted implication-based approach}. There also exist \emph{weighted implication-based approaches} (e.g.~\cite{LukaStraccia07,madrid:existenceandunicity,MadridOjeda-Aciego-2008a,MadridAciego2009}), which use rules of the form
 \begin{equation}r: a \stackrel{\alpha}{\gets} \pretnorm(b_{1},\ldots,b_{n})\label{eq:wrule}\end{equation}
 where $r$ is a rule label, $a$ is an atom, $b_{i}$, for $1 \leq i \leq n$, are extended literals, and $\alpha \in [0,1]$. An interpretation $I$ models this rule iff
 $$\rulefimp{r}(\pretnorm(I(b_{1}),\ldots,I(b_{n})),I(a)) \geq \alpha$$
 Since $\rulefimp{r}$ is the residual implicator of $\pretnorm$ this is equivalent to 
 $$I(a) \geq \pretnorm(I(b_{1}),\ldots,I(b_{n}),\alpha)$$
 Hence a weighted rule of the form (\ref{eq:wrule}) above can be simulated by the rule
 $$r': a \gets \pretnorm(b_{1},\ldots,b_{n},\alpha)$$
Thus, the translation presented in this paper can equally be applied to weighted implication-based approaches.

In addition to the implication-based approaches (IB) one also finds \emph{annotation-based (AB) approaches} (see e.g.~\cite{straccia-annotated}). In the annotation-based setting a rule is of the form 
 $$A: f(\beta_{1},\ldots,\beta_{n}) \gets B_{1} : \beta_{1},\ldots,B_{n} : \beta_{n}$$
Such a rule asserts that the value of atom $A$ is at least $f(\beta_{1},\ldots,\beta_{n})$ if the value of each atom $B_{i}$, $1 \leq n$, is at least $\beta_{i}$. In this setting $f$ is a computable function and $\beta_{i}$ is either a constant or a variable ranging over an appropriate truth domain. Due to the difference in semantics between the IB and AB approaches, our method is not directly applicable to AB frameworks. One can find an in-depth overview of logic programming with fuzzy logic in~\cite{Straccia:reasoningweb}.

In \cite{fasp1}, an implementation method for FASP programs with a finite truth value set is presented, which consists of translating a FASP program to a specific DLVHEX program. For solving continuous problems, however, we need infinite truth values, for which a solving method is much harder to construct. Our method is able to handle continuous problems, and additionally is more flexible than \cite{fasp1} since any method for solving continuous problems can be used as the backend, including fuzzy SAT solvers and the vast body of existing MIP solvers.

Apart from fuzzy answer set programming, in recent years possibilistic and probabilistic answer set programming have been developed. Both of these approaches can be reduced to classical SAT. In the case of probabilistic ASP, there is a direct translation method \cite{Saad:ECSQARU2009}, while a possibilistic ASP program can be translated to an equivalent ASP program, on which the ASSAT procedure can then be applied.

\section{Conclusion}\label{sec:conclusion}

In this paper we have focused on the translation of FASP programs to particular satisfiability problems. We have introduced the completion of a program and have shown that in the case of programs without loops, the models of the completion are exactly the answer sets. Furthermore, to solve the general problem, we have generalized the notion of loop formulas. This translation is important because it allows to solve FASP programs using fuzzy SAT solvers. Under appropriate restrictions, for example, the satisfiability problems that are obtained can be solved using off-the-shelf mixed integer programming methods.  From an application point of view, this allows us to encode continuous optimization problems in a declarative style which is similar to traditional answer set programming.   This style of encoding problems is often more intuitive, as well as more concise, while the results we have presented ensure that the power of mathematical programming techniques can still be employed to find the solutions.

\section*{Acknowledgment}
The authors would like to thank the anonymous reviewers for their useful suggestions and remarks.
\bibliographystyle{acmtrans}
\bibliography{loops}

\end{document}